\documentclass[lettersize,journal]{IEEEtran}
\usepackage{amsmath,amsfonts}
\usepackage{algorithmic}
\usepackage{algorithm}
\usepackage{graphicx}
\usepackage[caption=false, font=footnotesize]{subfig}
\usepackage{array}
\usepackage{textcomp}
\usepackage{stfloats}
\usepackage{url}
\usepackage{verbatim}
\usepackage{cite}
\usepackage{xcolor}
\usepackage{multirow}
\usepackage{lipsum}
\usepackage{siunitx}
\usepackage{balance}
\usepackage{amssymb}
\usepackage{amsthm}
\theoremstyle{plain}
\newtheorem{theorem}{Theorem}
\newtheorem{definition}{Definition}

\newtheorem{lemma}{Lemma}

\newtheorem{problem}{Problem}[section]

\newtheorem{remark}{Remark}

\newcommand{\cidx}{c} % Constraint index for CBFs
\begin{document}

% \title{Distributed Perception-Aware Safe Control for 3D leader--follower Formation via CBF}
\title{Distributed 3D Leader–Follower Formation Control with Field-of-View Safety via Control Barrier Functions}
\author{Immanuel R. Santjoko,$^{1\dagger}$
 Richie R. Suganda,$^{1\dagger}$
 Miao Pan,$^{1}$
 and Bin Hu$^{1,2}$%
 \thanks{$^\dagger$The first two authors contributed equally to this work.}%
 \thanks{$^{1}$I. R. Santjoko, R. R. Suganda, B. Hu, and M. Pan are with the Department of Electrical and Computer Engineering, University of Houston, Houston, TX 77004, USA. ({\tt\small isantjoko@uh.edu, rrsugand@uh.edu, bhu11@central.uh.edu})}%
 \thanks{$^{2}$B. Hu is with the Department of Engineering Technology, University of Houston, Houston, TX 77004, USA.}%
}

% \author{Anonymous Authors for Double-Blind Review}

% The paper headers
\markboth{Journal of \LaTeX\ Class Files,~Vol.~14, No.~8, August~2021}%
{Shell \MakeLowercase{\textit{et al.}}: A Sample Article Using IEEEtran.cls for IEEE Journals}
\maketitle
\begin{abstract}
This letter proposes a distributed 3D leader–follower formation~(3D-LFF) control framework for multi-UAV systems that achieves formation tracking while enforcing perception safety constraints. Maintaining safe, vision-based 3D-LFF is challenging because onboard cameras impose strict Field-of-View (FOV) limitations, and demanding formation commands can drive the leader outside the follower’s camera frustum, resulting in loss of visibility. To address this issue, we develop a perception-aware safe control architecture that guarantees visibility by construction. First, we derive a relative kinematic model in a line-of-sight coordinate representation and design a distributed 3D-LFF tracking controller using only locally available relative states. Next, we embed the nominal formation controller within a Control Barrier Function–based Quadratic Program (CBF-QP) safety filter that minimally modifies the commanded velocities to maintain the leader inside the follower’s camera frustum while preserving formation tracking whenever feasible. Gazebo simulations and Crazyflie hardware experiments validate the proposed approach, demonstrating accurate formation tracking and effective FOV enforcement, including scenarios in which the nominal desired formation conflicts with visibility constraints.
\end{abstract}

% \begin{IEEEkeywords}
% Aerial systems: perception and autonomy, multi-agent systems, safety
% \end{IEEEkeywords}

\begin{IEEEkeywords}
Multi-UAV systems, distributed control, 3D leader--follower formation, field-of-view constraints, control barrier functions
\end{IEEEkeywords}

\section{Introduction}\label{sec:intro}
\IEEEPARstart{U}{nmanned} aerial vehicles (UAVs) are increasingly deployed in inspection \cite{11170465} and exploration \cite{liu2022large} due to their agility and ability to access hazardous or hard-to-reach areas. 
To scale sensing coverage and mission efficiency, UAVs are often operated as multi-agent systems, where leader--follower formations provide a simple and scalable coordination primitive \cite{sun2016event, sun2017event}. 
In GPS-denied environments, however, formation performance depends critically on the follower's ability to estimate the leader's relative state using onboard sensing (e.g., monocular/stereo cameras, fiducial markers, or other relative sensors). 
This introduces an intrinsic \emph{perception constraint}: the leader must remain inside the follower's camera frustum to maintain continuous visibility and avoid degradation or loss of relative-state estimation.
As illustrated in Fig.~\ref{fig:intro}, this creates a direct conflict between formation objectives (tracking a desired 3D relative configuration) and sensing limitations (finite field of view (FOV) and depth range). 
Motivated by this challenge, we propose a distributed perception-aware control framework that relies only on \emph{local relative-state information} to achieve 3D leader--follower formation tracking while explicitly enforcing camera-frustum visibility constraints.

Vision-based leader--follower formation control broadly spans (i) \emph{position-/translation-based} formulations and (ii) \emph{pose-/bearing-based} formulations.
Early position-based approaches~\cite{cowan2003vision, panagou2014cooperative-2D} design feedback laws using relative translation (or reduced relative states), which can be effective for planar settings but typically do not model camera frustum constraints in a way that yields rigorous visibility guarantees.
More recent works incorporate richer relative information---e.g., bearing measurements, relative orientation cues, or fiducial-marker-based pose estimation---to improve robustness and enable more principled reasoning about sensing constraints. 
For example, marker-based relative localization has been used to support leader--follower flight in GPS-denied environments, demonstrating practical feasibility and sensitivity to delay and occlusions~\cite{walter2019uvdar-3Dboresight, abdelaal2021visualcollab}. 
Complementary approaches study bearing-only or bearing-dominant formation strategies and robustness to leader motion~\cite{ramirez2024vision}.
\begin{figure}[t]
 \centering
 \includegraphics[width=1\linewidth]{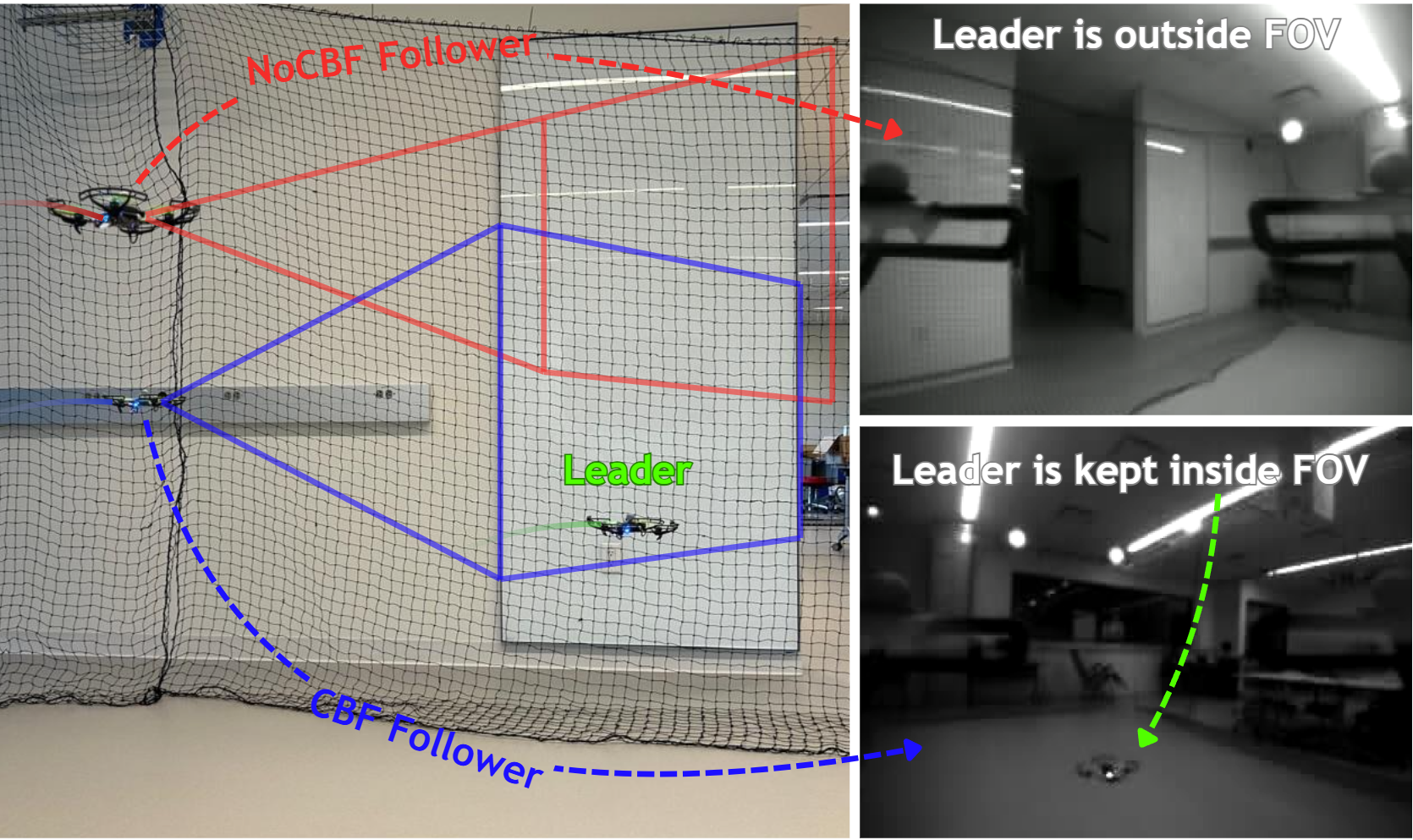}
 % \caption{Perception-aware leader–follower safe control. Leader–follower formations use UAVs with body-fixed cameras and limited FOV, requiring the leader to remain within the follower’s view for safe operation.}
 \caption{Perception-aware leader--follower safe control. Without perception constraints, the baseline follower (NoCBF, red) loses sight of the leader during maneuvers. In contrast, the proposed CBF-equipped follower (CBF, blue) actively modifies its commanded velocities to ensure the leader remains strictly within its camera's FOV.}
  \label{fig:intro}
\end{figure}
Perception constraints and visibility maintenance have also been studied in visual servoing and perception-aware control. 
A variety of methods enforce visibility/FOV constraints using constrained optimization, MPC, or barrier-like certificates; representative examples include visibility-constrained visual servoing and target tracking under camera limitations~\cite{salehi2021constrained, ning2024bearing, heshmati2024control}. 
Control Barrier Functions (CBFs) have emerged as a particularly attractive tool for safety-critical control due to their forward invariance guarantees and compatibility with real-time quadratic programs (CBF-QPs), with applications including cruise control \cite{hu2021trust}, obstacle avoidance \cite{zhang2025gcbf+}, and lane keeping \cite{bena2023safety}. 
Recent perception-aware CBF formulations address visibility constraints for tracking and collision avoidance~\cite{bena2023safety, kim2025visibility, trimarchi2025control}. 
Within multi-agent settings, distributed CBF-QP implementations and scalable safe control frameworks have been developed, including perception-aware leader--follower formulations for ground robots~\cite{suganda2025distributed, suganda2026formation} alongside other distributed methods~\cite{tan2021distributed, zhang2023neural}. Yet, directly extending these planar visibility methods to a \emph{3D camera frustum} in UAV formations remains nontrivial due to the coupled 3D geometry and relative heading effects.

This letter presents a distributed perception-aware framework for 3D leader--follower formation with explicit camera-frustum safety. The main contributions are three fold. Our first contribution is a relative 3D kinematic model in spherical coordinates that captures the camera-to-leader geometry (including relative heading) and yields a 3D leader--follower formation (3D-LFF) control law that relies on local relative measurements and distributed leader states. 
Second contribution is a real-time CBF-QP safety filter that enforces the follower's camera frustum---FOV and depth limits---by minimally modifying the nominal formation commands, thereby guaranteeing continuous visibility while preserving tracking whenever feasible. 
Third, the proposed method is validated in Gazebo simulations and Crazyflie experiments, demonstrating reliable formation tracking and strict FOV enforcement, including scenarios where the desired formation is incompatible with visibility constraints. 
% A video demonstrating these results is available at \url{https://youtu.be/cK3tFkGduiA}.}

The remainder of this paper is organized as follows. 
Section~II presents preliminaries. 
Section~III details the system model and problem formulation. 
Section~IV presents the proposed distributed controller and CBF-QP safety filter. 
Section~V reports Gazebo simulation and Crazyflie experimental results. 
Section~VI concludes the paper and discusses future directions.
\section{Preliminaries}\label{sec:preliminaries}

\textit{Notation}: Let $\mathbb{R}$, $\mathbb{R}_{\ge0}$, and $\mathbb{R}_{>0}$ denote the sets of real, non-negative, and positive real numbers, respectively. 
Let $SO(3)$ denote the special orthogonal group of $3 \times 3$ rotation matrices.
For $\mathbf{x}\in\mathbb{R}^n$, $\|\mathbf{x}\|$ denotes the Euclidean norm. 
The standard basis vectors of $\mathbb{R}^3$ are denoted by $\mathbf{b}_1,\mathbf{b}_2,\mathbf{b}_3$.
For a vector $\boldsymbol{\omega}\in\mathbb{R}^3$, $[\boldsymbol{\omega}]_{\times}$ denotes the associated skew-symmetric matrix. 
% For a scalar function $h(\mathbf{x})$, $\nabla_{\mathbf{x}} h$ denotes its gradient vector
A continuous function $\kappa:(-a,b)\rightarrow\mathbb{R}$ with $a,b\in\mathbb{R}_{>0}$ is an \emph{extended class-$\mathcal{K}$} function if it is strictly increasing and satisfies $\kappa(0)=0$.

Consider the control-affine system
\begin{equation}
\dot{\mathbf{x}} = f(\mathbf{x}) + g(\mathbf{x})\mathbf{u},
\label{eq:affine_sys}
\end{equation}
where $\mathbf{x}\in\mathcal{X}\subset\mathbb{R}^n$ is the state within the admissible state space $\mathcal{X}$, and $\mathbf{u}\in\mathcal{U}\subset\mathbb{R}^m$ is the control input drawn from the set of admissible controls $\mathcal{U}$. 
The functions $f:\mathcal{X}\to\mathbb{R}^n$ and $g:\mathcal{X}\to\mathbb{R}^{n\times m}$ are assumed locally Lipschitz on $\mathcal{X}$. Safety is encoded as forward invariance of a \emph{safe set} $\mathcal{S}\subset\mathcal{X}$ defined by a continuously differentiable function $h:\mathcal{X}\to\mathbb{R}$:
\begin{equation}
\mathcal{S} = \{\mathbf{x}\in\mathcal{X}\mid h(\mathbf{x})\ge 0\}.
\end{equation}
The system \eqref{eq:affine_sys} is safe if any trajectory starting in $\mathcal{S}$ remains in $\mathcal{S}$ for all $t\ge 0$.

\begin{definition}[Control Barrier Function {\cite{ames2016control}}]
A continuously differentiable function $h$ is a valid CBF for \eqref{eq:affine_sys} if there exists an extended class-$\mathcal{K}$ function $\kappa(\cdot)$ such that, for all $\mathbf{x}\in\mathcal{S}$,
\begin{equation}
\sup_{\mathbf{u}\in\mathcal{U}}
\left\{
\nabla_{\mathbf{x}}h(\mathbf{x})^\top\big(f(\mathbf{x})+g(\mathbf{x})\mathbf{u}\big) + \kappa\!\left(h(\mathbf{x})\right)
\right\}\ge 0.
\label{eq:cbf}
\end{equation}
Any feedback law $\mathbf{u}(\mathbf{x})\in\mathcal{U}$ satisfying \eqref{eq:cbf} renders $\mathcal{S}$ forward invariant for \eqref{eq:affine_sys}.
\end{definition}
\section{System Framework and Problem Formulation}\label{sec:system_framework}

This paper considers a multi-quadrotor UAV system where each vehicle is modeled as a rigid body, with $j$ and $i$ indexing the leader and follower, respectively. 
Let $\mathcal{F}_I$ denote an inertial frame, while $\mathcal{F}_{B_k}$ and $\mathcal{F}_{\psi_k}$ denote the body-fixed and yaw-aligned frames attached to UAV $k \in \{i, j\}$.
The dynamics of the $k$-th vehicle are given by
\begin{subequations}\label{eq:uav_dynamics}
\begin{align}
\dot{\mathbf{p}}_k &= \mathbf{v}_k, \label{eq:pos_dyn} \\
m_k \dot{\mathbf{v}}_k &= - m_k g \mathbf{b}_3 + T_k \mathbf{R}_k \mathbf{b}_3, \label{eq:vel_dyn} \\
\dot{\mathbf{R}}_k &= \mathbf{R}_k [\boldsymbol{\omega}_k]_\times, \label{eq:rot_dyn} \\
\mathbf{J}_k \dot{\boldsymbol{\omega}}_k &= \boldsymbol{\tau}_k - \boldsymbol{\omega}_k \times (\mathbf{J}_k \boldsymbol{\omega}_k), \label{eq:ang_dyn}
\end{align}
\end{subequations}
where $\mathbf{p}_k, \mathbf{v}_k \in \mathbb{R}^3$ denote the position and velocity expressed in $\mathcal{F}_{I}$, $\mathbf{R}_k \in SO(3)$ is the rotation matrix from $\mathcal{F}_{B_k}$ to $\mathcal{F}_{I}$, and $\boldsymbol{\omega}_k \in \mathbb{R}^3$ is the body angular velocity.
The constant $m_k \in \mathbb{R}_{>0}$ denotes the mass, $g \in \mathbb{R}$ the gravitational acceleration, and $\mathbf{J}_k \in \mathbb{R}^{3 \times 3}$ the inertia matrix. 
To drive this system, $T_k \in \mathbb{R}_{\geq 0}$ and $\boldsymbol{\tau}_k \in \mathbb{R}^3$ are defined as the actuation inputs, representing the total thrust and body torque, respectively.

% Let $\mathbf{p}_k\in\mathbb{R}^3$ and $\mathbf{R}_k\in SO(3)$ denote the position and attitude of UAV $k\in\{i,j\}$ in $\mathcal{F}_I$. 

For this leader--follower pair, let the subscript $(\cdot)_{ij}$ denote quantities of the leader $j$ relative to the follower $i$, as depicted in Fig.~\ref{fig:leader--follower-geometry}. 
% Let $\mathcal{F}_I$ be the inertial frame and $\mathcal{F}_{B_i}$ the body-fixed frame of follower $i$. 
The follower carries a forward-facing camera rigidly mounted along its body $x$-axis, with optical center offset $d_i\in\mathbb{R}_{>0}$ from the body origin. 
The relative position of the leader with respect to the follower camera, denoted by $\mathbf{q}_{ij} \triangleq [x_{ij},y_{ij},z_{ij}]^\top\in  \mathbb{R}^3$ and expressed in $\mathcal{F}_{B_i}$, is
\begin{equation}\label{eq:q}
\mathbf{q}_{ij} \triangleq 
\mathbf{R}_i^{\top}\left(\mathbf{p}_j-\mathbf{p}_i\right) - d_i\mathbf{b}_1.
\end{equation}
% where $\mathbf{b}_1$ is the first standard basis vector of $\mathbb{R}^3$.

To specify a 3D leader--follower formation using local measurements, we represent $\mathbf{q}_{ij}$ in spherical coordinates and augment it with a relative heading variable (Fig.~\ref{fig:leader--follower-geometry}). Define the relative state
\begin{equation}\label{eq:xij_def}
\mathbf{x}_{ij}\triangleq
\begin{bmatrix}
L_{ij} & \phi_{ij} & \xi_{ij} & \alpha_{ij}
\end{bmatrix}^{\top}\in  \mathbb{R}^4,
\end{equation}
where $L_{ij}\triangleq \|\mathbf{q}_{ij}\|$, $\phi_{ij}\in(-\pi,\pi]$ is the horizontal angle (azimuth), and $\xi_{ij}\in(-\pi,\pi]$ is the elevation angle, all expressed in $\mathcal{F}_{B_i}$. The relative heading is $\alpha_{ij}\triangleq \psi_j-\psi_i-\phi_{ij}$, where $\psi_i,\psi_j\in(-\pi,\pi]$ denote the follower and leader yaw angles defined relative to the inertial frame $\mathcal{F}_I$, respectively.

\begin{figure}[!t]
 \centering
 \includegraphics[width=1\linewidth]{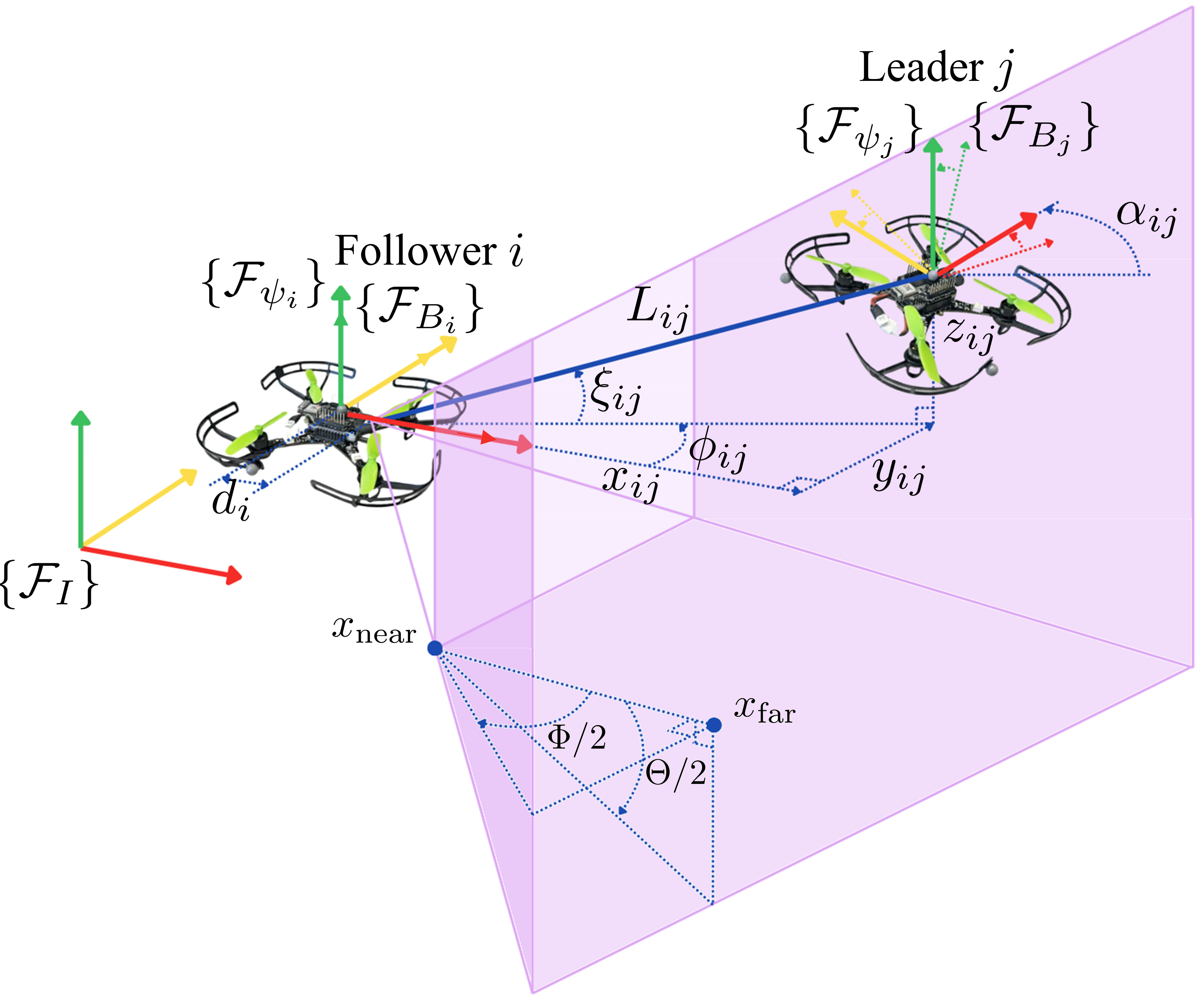}
 \caption{Leader--follower geometry and camera frustum. The leader position relative to follower $i$’s camera is expressed in $\mathcal{F}_{B_i}$ as $\mathbf{q}_{ij}=[x_{ij},y_{ij},z_{ij}]^\top$ and equivalently by $(L_{ij},\phi_{ij},\xi_{ij})$, augmented with the relative heading $\alpha_{ij}$. The magenta region depicts the camera frustum defined by depth bounds $[x_{\text{near}},x_{\text{far}}]$ and FOV angles $(\Phi,\Theta)$.
}
 \label{fig:leader--follower-geometry}
\end{figure}
Visibility is enforced by constraining the leader to lie inside the follower camera frustum. Let $\Phi, \Theta\in(0,\pi)$ denote the camera horizontal and vertical FOV angles, and let $x_\mathrm{near}, x_\mathrm{far}\in\mathbb{R}_{>0}$ denote the depth range limits along the camera optical axis, satisfying $x_\mathrm{near}<x_\mathrm{far}$.
We define the perception safe set $\mathcal{S}_{ij}^\mathbf{q}$ as the set of relative Cartesian positions satisfying  the frustum constraints (illustrated in Fig.~\ref{fig:leader--follower-geometry}):
% \begin{equation}\label{eq:safe_set_placeholder}
% \mathcal{S}_{ij}^\mathbf{q}\triangleq \left\{\mathbf{x}_{ij}\ \middle|\ 
% x_{\text{near}}\le x_{ij}\le x_{\text{far}},\ 
% |\phi_{ij}|\le \frac{\Phi}{2},\ 
% |\xi_{ij}|\le \frac{\Theta}{2}
% \right\}.
% \end{equation}
\begin{equation}\label{eq:frustum-set}
\mathcal{S}_{ij}^\mathbf{q}
\triangleq
\left\{
\mathbf{q}_{ij} \in \mathbb{R}^3\ \middle|\ 
\begin{aligned}
&x_\mathrm{near} \le x_{ij} \le x_\mathrm{far}\\
&|y_{ij}| \le x_{ij}\tan(\Phi/2)\\
&|z_{ij}| \le x_{ij}\tan(\Theta/2)
\end{aligned}
\right\}.
\end{equation}

\begin{problem}\label{prob}
Given a desired formation $\mathbf{x}^d_{ij}(t) \triangleq 
\begin{bmatrix}
L^d_{ij}(t) & \phi^d_{ij}(t) & \xi^d_{ij}(t) & \alpha^d_{ij}(t)
\end{bmatrix}^{\top}$ for the leader--follower pair, let $\mathbf{e}_{ij}\triangleq \mathbf{x}_{ij}-\mathbf{x}^d_{ij}$ denote the tracking error. The problem considered in this paper is to design perception-aware distributed 3D-LFF control laws for the followers such that:
\begin{enumerate}
    \item \textbf{Formation tracking:} when the perception constraints are inactive, $\mathbf{e}_{ij}(t)$ converges exponentially to $\mathbf{0}$.
    \item \textbf{Perception safety:} the leader remains visible, i.e., $\mathbf{q}_{ij}(t)\in\mathcal{S}_{ij}^\mathbf{q}$ for all $t\ge 0$ whenever $\mathbf{q}_{ij}(0)\in\mathcal{S}_{ij}^\mathbf{q}$.
\end{enumerate}
\end{problem}

% To address Problem~\ref{prob}, we develop a nominal distributed 3D-LFF controller and a CBF-QP safety filter that enforces the frustum constraints in real time.

\begin{figure*}[!t]
 \centering
 \includegraphics[width=\textwidth]{figures/diagram.png}
\caption{Perception-aware safe 3D-LFF control architecture.
A high-level distributed 3D-LFF controller tracks the desired state $\mathbf{x}^d_{ij}(t)$ to generate the nominal command $\mathbf{u}_i^{\mathrm{nom}}$.
A perception-aware safety filter modifies $\mathbf{u}_i^{\mathrm{nom}}$ to enforce visibility constraints $\Gamma_\cidx$ via a CBF-QP, outputting the safe command $\mathbf{u}_i^{\mathrm{safe}}$ to keep the leader inside the perception safe set $\mathcal{S}_{ij}^\mathbf{q}$.
Finally, a low-level velocity controller tracks $\mathbf{u}_i^{\mathrm{safe}}$ for the full quadrotor dynamics.}
 \label{fig:method_overview}
\end{figure*}
%% ----------------------

\section{Main Results}\label{sec:method}
The proposed perception-aware safe 3D-LFF control framework employs a hierarchical architecture: an outer loop generates safe velocity commands, which are subsequently tracked by an inner-loop velocity controller handling the full-order system dynamics. 
As illustrated in Fig.~\ref{fig:method_overview}, this outer loop utilizes a cascaded two-stage design. 
First, a nominal formation controller is designed based on the relative 3D-LFF kinematics to track the desired formation, generating a nominal velocity command.
To guarantee perception safety, this nominal velocity command is subsequently processed through a CBF-QP safety filter. 
This filter computes a minimally modified safe velocity command that ensures the forward invariance of the perception safe set, while deviating minimally from the tracking objective. 
% The following subsections systematically derive and explain each component of the proposed system.

\subsection{Distributed 3D-LFF Controller}
% To track the velocity commands while accounting for the nonlinear full-order dynamics of the UAVs, the framework employs a hierarchical architecture. 
% In this setup, the high-level control interface, consisting of body-horizontal velocities and yaw rates, is regulated by a low-level velocity controller.
% We define this control interface as

The velocity command for UAV $k \in \{i, j\}$ consists of velocities and yaw rates expressed in $\mathcal{F}_{\psi_k}$, defined as
\begin{equation}\nonumber
% \label{eq:vbar_cmd2}
\mathbf{u}^{\mathrm{cmd}}_k \triangleq
\begin{bmatrix}
\overline{v}_{x,k}^{\mathrm{cmd}} & \overline{v}_{y,k}^{\mathrm{cmd}} & \overline{v}_{z,k}^{\mathrm{cmd}} & \overline{\omega}_{z,k}^{\mathrm{cmd}}
\end{bmatrix}^\top.
\end{equation}
Accordingly, let $\mathbf{u}_k \triangleq \begin{bmatrix} \overline{v}_{x,k} & \overline{v}_{y,k} & \overline{v}_{z,k} & \overline{\omega}_{z,k} \end{bmatrix}^\top$ denote the actual velocity and yaw rate of the quadrotor.
In this distributed architecture, the $i$-th follower computes its control using 3D-LFF states and the leader's communicated $\mathbf{u}_j$. 
Before designing this control action, we first formulate the 3D-LFF kinematics that relate these velocities to the formation evolution.

\begin{lemma}\label{lem:GF_matrices}
Consider 3D-LFF states $\mathbf{x}_{ij}=\begin{bmatrix}L_{ij} & \phi_{ij} & \xi_{ij} & \alpha_{ij}\end{bmatrix}^{\top}$ as shown in Fig.~\ref{fig:leader--follower-geometry}. The relative 3D-LFF kinematics can be written in the control-affine form
\begin{equation}
\dot{\mathbf{x}}_{ij}
=
\mathbf{F}(\mathbf{x}_{ij}) \mathbf{u}_j + \mathbf{G}(\mathbf{x}_{ij}) \mathbf{u}_i,
\label{eq:rel_kin}
\end{equation}
where the matrix-valued functions $\mathbf{F},\mathbf{G}:\mathbb{R}^4\rightarrow\mathbb{R}^{4\times4}$ are given by
\begin{align}\label{eq:FG}
\mathbf{F}(\mathbf{x}_{ij}) &=
\begin{bmatrix}
c_{\xi_{ij}} c_{\alpha_{ij}} & - c_{\xi_{ij}} s_{\alpha_{ij}} & s_{\xi_{ij}} & 0 \\[4pt]
\dfrac{s_{\alpha_{ij}}}{L_{ij} c_{\xi_{ij}}} & \dfrac{c_{\alpha_{ij}}}{L_{ij} c_{\xi_{ij}}} & 0 & 0 \\[8pt]
- \dfrac{s_{\xi_{ij}} c_{\alpha_{ij}}}{L_{ij}} & \dfrac{s_{\xi_{ij}} s_{\alpha_{ij}}}{L_{ij}} & \dfrac{c_{\xi_{ij}}}{L_{ij}} & 0\\[8pt]
- \dfrac{s_{\alpha_{ij}}}{L_{ij} c_{\xi_{ij}}} & - \dfrac{c_{\alpha_{ij}}}{L_{ij} c_{\xi_{ij}}} & 0 & 1 
\end{bmatrix},
\\
% \end{align}
% \begin{align}
\mathbf{G}(\mathbf{x}_{ij}) &=
\begin{bmatrix}
- c_{\xi_{ij}} c_{\phi_{ij}} & - c_{\xi_{ij}} s_{\phi_{ij}} & - s_{\xi_{ij}} & - d_i c_{\xi_{ij}} s_{\phi_{ij}} \\[4pt]
\dfrac{s_{\phi_{ij}}}{L_{ij} c_{\xi_{ij}}} & - \dfrac{c_{\phi_{ij}}}{L_{ij} c_{\xi_{ij}}} & 0 & - \dfrac{d_i c_{\phi_{ij}}}{L_{ij} c_{\xi_{ij}}} - 1 \\[8pt]
\dfrac{s_{\xi_{ij}} c_{\phi_{ij}}}{L_{ij}} & \dfrac{s_{\xi_{ij}} s_{\phi_{ij}}}{L_{ij}} & - \dfrac{c_{\xi_{ij}}}{L_{ij}} & \dfrac{d_i s_{\xi_{ij}} s_{\phi_{ij}}}{L_{ij}} \\[8pt]
- \dfrac{s_{\phi_{ij}}}{L_{ij} c_{\xi_{ij}}} & \dfrac{c_{\phi_{ij}}}{L_{ij} c_{\xi_{ij}}} & 0 & \dfrac{d_i c_{\phi_{ij}}}{L_{ij} c_{\xi_{ij}}}
\end{bmatrix},\nonumber
\end{align}
where $c_{(\cdot)}$ and $s_{(\cdot)}$ denote $\cos(\cdot)$ and $\sin(\cdot)$, respectively.
On the domain $\mathcal{S}_{ij}^\mathbf{q}$ where $L_{ij} \in \mathbb{R}_{>0}$ and $c_{\xi_{ij}} > 0$, the functions $\mathbf{F}$ and $\mathbf{G}$ are locally Lipschitz continuous. Furthermore, the input matrix $\mathbf{G}$ remains non-singular throughout this domain, ensuring that the subsequent formation control law and safety filter are well-defined.
% As will be shown in Section \ref{sec:perception_safety}, this condition is strictly enforced by the perception safe set $\mathcal{S}_{ij}^\mathbf{q}$.}
% To facilitate the algebraic cancellation of nonlinearities in the subsequent state feedback linearization (SFL) design, the analytical inverse of $\mathbf{G}(\mathbf{x}_{ij})$ is given by
% \begin{equation}
% \setlength{\arraycolsep}{2.5pt} % default is ~5pt
% \renewcommand{\arraystretch}{1.1} % row spacing
% \mathbf{G}^{-1}(\mathbf{x}_{ij})=
% \begin{bmatrix}
% - c_{\phi_{ij}} c_{\xi_{ij}} & 0 & L_{ij}s_{\xi_{ij}} c_{\phi_{ij}} & - L_{ij} s_{\phi_{ij}} c_{\xi_{ij}} \\
% - s_{\phi_{ij}} c_{\xi_{ij}} & d_i & L_{ij} s_{\xi_{ij}} s_{\phi_{ij}} & L_{ij} c_{\phi_{ij}} c_{\xi_{ij}} + d_i \\
% - s_{\xi_{ij}} & 0 & - L_{ij} c_{\xi_{ij}} & 0 \\
% 0 & -1 & 0 & -1
% \end{bmatrix}.\nonumber
% % \label{eq:G_inverse}
% \end{equation}
\end{lemma}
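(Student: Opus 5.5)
I will derive \eqref{eq:rel_kin} by differentiating the Cartesian relative position \eqref{eq:q} and then converting to spherical coordinates.

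\textbf{Cartesian relative kinematics.} Under the kinematic (outer-loop) abstraction, each UAV moves with the translational velocity $\bar{\mathbf v}_k=[\overline v_{x,k},\overline v_{y,k},\overline v_{z,k}]^\top$ expressed in its yaw frame $\mathcal{F}_{\psi_k}$ and rotates with yaw rate $\overline\omega_{z,k}$. I will therefore identify $\mathbf{R}_i$ with the yaw rotation $\mathbf{R}_z(\psi_i)$, so that roll and pitch are treated as handled by the inner loop. With this identification,
\begin{equation}\nonumber
\dot{\mathbf q}_{ij} = -[\overline\omega_{z,i}\mathbf b_3]_\times(\mathbf q_{ij}+d_i\mathbf b_1) + \mathbf R_z(\psi_j-\psi_i)\bar{\mathbf v}_j - \bar{\mathbf v}_i .
\end{equation}
The $d_i$ terms come from the camera offset. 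They produce $-d_i\overline\omega_{z,i}\mathbf b_2$ in the Cartesian rates, which is the source of the fourth column of $\mathbf G$.

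\textbf{Spherical coordinates.} Write $x=Lc_\xi c_\phi$, $y=Lc_\xi s_\phi$, $z=Ls_\xi$, and use the standard formulas
\begin{equation}\nonumber
\dot L=\hat{\mathbf r}^\top\dot{\mathbf q},\qquad
\dot\phi=\frac{\hat{\boldsymbol\phi}^\top\dot{\mathbf q}}{Lc_\xi},\qquad
\dot\xi=\frac{\hat{\boldsymbol\xi}^\top\dot{\mathbf q}}{L},
\end{equation}
with the unit vectors
\begin{equation}\nonumber
\hat{\mathbf r}=[c_\xi c_\phi,\;c_\xi s_\phi,\;s_\xi]^\top,\quad
\hat{\boldsymbol\phi}=[-s_\phi,\;c_\phi,\;0]^\top,\quad
\hat{\boldsymbol\xi}=[-s_\xi c_\phi,\;-s_\xi s_\phi,\;c_\xi]^\top .
\end{equation}
The follower terms give $\mathbf G$ directly. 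The $-\bar{\mathbf v}_i$ part yields the first three columns. The rotation term satisfies $-\overline\omega_{z,i}\mathbf b_3\times\mathbf q\perp\hat{\mathbf r},\hat{\boldsymbol\xi}$ and contributes $-\overline\omega_{z,i}$ to $\dot\phi$, which is the "$-1$" entry. The offset term adds the $d_i$ entries.

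For the leader terms, $\mathbf R_z(\psi_j-\psi_i)\bar{\mathbf v}_j$ is projected onto the same basis. Substituting $\psi_j-\psi_i=\alpha+\phi$ makes the $\phi$-dependence combine through angle-difference identities. The result depends only on $\alpha$ and $\xi$, which gives the first three rows of $\mathbf F$.

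The last row follows from $\dot\alpha=\overline\omega_{z,j}-\overline\omega_{z,i}-\dot\phi$, i.e. the fourth row equals $[0,0,0,1]$ minus the $\phi$-row for $\mathbf F$, and $[0,0,0,-1]$ minus the $\phi$-row for $\mathbf G$. This bookkeeping is routine; the main risk is sign conventions, in particular the sign of $\xi$ relative to $z$ and the sign of the elevation unit vector. I will check these against the stated entries of $\mathbf G$, for example $\partial\dot\xi/\partial\overline v_{z,i}=-c_\xi/L$.

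\textbf{Lipschitz continuity.} On $\mathcal S^\mathbf q_{ij}$ we have $x_{ij}\ge x_\mathrm{near}>0$. Hence $L\ge x_\mathrm{near}$, and $|\xi|<\pi/2$ strictly because $|z|\le x\tan(\Theta/2)$ with $\Theta<\pi$, so $c_\xi>0$. The entries of $\mathbf F$ and $\mathbf G$ are then compositions of smooth functions with denominators bounded away from zero on compact subsets, hence smooth and therefore locally Lipschitz.

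\textbf{Nonsingularity of $\mathbf G$.} This is the step I expect to be the main obstacle, and I will handle it by exhibiting the inverse. First exploit structure: adding row 2 to row 4 gives the row $[0,0,0,-1]$. Expanding along that row reduces $\det\mathbf G$ to $-\det$ of the upper-left $3\times3$ block. That block is $-\mathbf D\mathbf M^\top$, where $\mathbf M=[\hat{\mathbf r},\hat{\boldsymbol\phi},\hat{\boldsymbol\xi}]$ is orthonormal up to sign and $\mathbf D=\mathrm{diag}(1,1/(Lc_\xi),1/L)$. Its determinant is therefore $\pm 1/(L^2c_\xi)\neq0$ on the domain. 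To make this concrete I will verify $\mathbf G\mathbf G^{-1}=\mathbf I$ using the candidate inverse, whose rows are built from $\hat{\mathbf r}$, $Lc_\xi\hat{\boldsymbol\phi}$, $L\hat{\boldsymbol\xi}$ together with the $d_i$ corrections.
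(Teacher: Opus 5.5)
Your proposal is correct and follows essentially the paper's route: the yaw-only approximation $\mathbf{R}_i\approx\mathbf{R}_z(\psi_i)$, the Cartesian rate $\dot{\mathbf{q}}_{ij}$, projection to spherical coordinates using $\psi_j-\psi_i=\alpha_{ij}+\phi_{ij}$, and the bounds $L_{ij}\ge x_{\mathrm{near}}>0$, $c_{\xi_{ij}}>0$ on $\mathcal{S}_{ij}^{\mathbf{q}}$; your orthonormal-frame projections are an equivalent, tidier version of the paper's direct differentiation of $\mathrm{atan2}(y_{ij},x_{ij})$ and $s_{\xi_{ij}}=z_{ij}/L_{ij}$. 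Your row-reduction argument gives $\det\mathbf{G}=-\det(-\mathbf{D}\mathbf{M}^\top)=1/(L_{ij}^2c_{\xi_{ij}})$, using $\hat{\mathbf r}\times\hat{\boldsymbol\phi}=\hat{\boldsymbol\xi}$ so that $\det\mathbf{M}=+1$, and this actually justifies the determinant the paper only states; the one slip is that in the explicit inverse these frame vectors appear as columns, not rows, because the inverse of the $3\times 3$ block is $-\mathbf{M}\mathbf{D}^{-1}$.
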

\begin{proof}
See Appendix~\ref{app:los_derivation}.
\end{proof}

\begin{theorem}\label{thm:sfl_tracking}
Consider the 3D-LFF kinematics \eqref{eq:rel_kin} and the formation tracking error $\mathbf{e}_{ij}\triangleq \mathbf{x}_{ij} - \mathbf{x}^d_{ij}$. Assume that the low-level velocity tracking is ideal such that $ \mathbf{u}_i^{\mathrm{nom}} = \mathbf{u}_i$.
Define the nominal formation tracking control law as
\begin{equation}\label{eq:sfl_control}
\mathbf{u}_i^{\mathrm{nom}} \triangleq \mathbf{G}^{-1}(\mathbf{x}_{ij})\,
\big(\dot{\mathbf{x}}^d_{ij} - \mathbf{K} \mathbf{e}_{ij} - \mathbf{F}(\mathbf{x}_{ij}) \mathbf{u}_j\big),
\end{equation}
where $\mathbf{K}\succ 0$ is a diagonal gain matrix. 
Assuming the relative state remains within the perception safe set ($\mathbf{q}{ij}(t) \in \mathcal{S}{ij}$ for all $t \ge 0$), the formation tracking error $\mathbf{e}_{ij}$ converges exponentially to $\mathbf{0}$.
\end{theorem}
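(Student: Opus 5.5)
The argument is a standard feedback-linearization one, carried out directly on the relative kinematics of Lemma~\ref{lem:GF_matrices}.

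\emph{Step 1: well-posedness.} The standing assumption is that $\mathbf{q}_{ij}(t)\in\mathcal{S}_{ij}^\mathbf{q}$ for all $t\ge 0$. On this set $x_{ij}\ge x_\mathrm{near}>0$, so $L_{ij}>0$. Also $c_{\xi_{ij}}>0$, because $|z_{ij}|\le x_{ij}\tan(\Theta/2)$ with $\Theta<\pi$ forces $|\xi_{ij}|<\pi/2$. By Lemma~\ref{lem:GF_matrices}, $\mathbf{F}$ and $\mathbf{G}$ are then locally Lipschitz and $\mathbf{G}$ is nonsingular along the trajectory.

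To double-check invertibility I would compute $\det\mathbf{G}$ explicitly, or verify the closed-form inverse, which has only bounded entries in $L_{ij}$ and trigonometric terms. I expect $\det\mathbf{G}=\pm 1/(L_{ij}^2 c_{\xi_{ij}})$, which is nonzero on the domain. Hence \eqref{eq:sfl_control} is well defined. Assuming $\mathbf{x}^d_{ij}$ is $C^1$ and $\mathbf{u}_j$ is piecewise continuous and bounded, the closed-loop vector field is locally Lipschitz in the state, so solutions exist and are unique.

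\emph{Step 2: closed-loop error dynamics.} With ideal velocity tracking, $\mathbf{u}_i=\mathbf{u}_i^{\mathrm{nom}}$. Substituting \eqref{eq:sfl_control} into \eqref{eq:rel_kin} gives
\begin{equation}\nonumber
\dot{\mathbf{x}}_{ij}=\mathbf{F}\mathbf{u}_j+\mathbf{G}\mathbf{G}^{-1}\big(\dot{\mathbf{x}}^d_{ij}-\mathbf{K}\mathbf{e}_{ij}-\mathbf{F}\mathbf{u}_j\big)=\dot{\mathbf{x}}^d_{ij}-\mathbf{K}\mathbf{e}_{ij}.
\end{equation}
It follows that $\dot{\mathbf{e}}_{ij}=-\mathbf{K}\mathbf{e}_{ij}$, and the leader input cancels exactly.

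\emph{Step 3: exponential convergence.} Because $\mathbf{K}=\mathrm{diag}(k_1,\dots,k_4)$ with each $k_\ell>0$, the error equation decouples into $e_\ell(t)=e^{-k_\ell t}e_\ell(0)$. Therefore
\begin{equation}\nonumber
\|\mathbf{e}_{ij}(t)\|\le e^{-k_{\min}t}\|\mathbf{e}_{ij}(0)\|,
\end{equation}
which is exponential convergence. Equivalently, one can take the Lyapunov function $V=\tfrac12\|\mathbf{e}_{ij}\|^2$, which satisfies $\dot V\le -2k_{\min}V$.

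\emph{Main obstacle.} The only nontrivial point is Step 1: certifying that $\mathbf{G}$ is invertible on the whole domain and that the angle coordinates stay in their smooth charts. Working on $\mathcal{S}_{ij}^\mathbf{q}$ keeps $|\phi_{ij}|,|\xi_{ij}|<\pi/2$ and $L_{ij}$ away from zero, so the spherical parametrization has no singularities there.

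A subtlety concerns $\alpha_{ij}$ and the other angles, which are defined modulo $2\pi$. The error $\mathbf{e}_{ij}$ should be interpreted via a continuous lift of the angles along the trajectory, so that $\dot{\mathbf{e}}_{ij}$ is well defined. I would state this interpretation explicitly as part of the proof.
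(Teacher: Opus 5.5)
Your proposal is correct and follows the same route as the paper: it invokes Lemma~\ref{lem:GF_matrices} for the invertibility of $\mathbf{G}$ on the frustum, substitutes \eqref{eq:sfl_control} into \eqref{eq:rel_kin} to obtain $\dot{\mathbf{e}}_{ij}=-\mathbf{K}\mathbf{e}_{ij}$, and concludes $\|\mathbf{e}_{ij}(t)\|\le e^{-\lambda_{\min}(\mathbf{K})t}\|\mathbf{e}_{ij}(0)\|$. Your additional checks go beyond what the paper writes without changing the argument: the explicit determinant (which is indeed $+1/(L_{ij}^2 c_{\xi_{ij}})$, as in the appendix), existence of solutions, and the continuous lift of the angle errors.
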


\begin{proof}
Consider the 3D-LFF kinematics from Lemma~\ref{lem:GF_matrices} and define the formation tracking error as $\mathbf{e}_{ij} \triangleq \mathbf{x}_{ij} - \mathbf{x}^d_{ij}$. The resulting error dynamics are given by $\dot{\mathbf{e}}_{ij} = \mathbf{F}(\mathbf{x}_{ij})\mathbf{u}_j + \mathbf{G}(\mathbf{x}_{ij})\mathbf{u}_i - \dot{\mathbf{x}}^d_{ij}$. To shape this response, we design the nominal control law $\mathbf{u}_i^{\mathrm{nom}} \triangleq \mathbf{G}^{-1}(\mathbf{x}_{ij}) \mathbf{r}_i$, where the virtual input is defined as $\mathbf{r}_{i} \triangleq \dot{\mathbf{x}}^d_{ij} - \mathbf{K} \mathbf{e}_{ij} - \mathbf{F}(\mathbf{x}_{ij}) \mathbf{u}_j$. Per Lemma \ref{lem:GF_matrices}, $\mathbf{G}$ is non-singular on $\mathcal{S}_{ij}^\mathbf{q}$, ensuring $\mathbf{u}_i^{\mathrm{nom}}$ is well-defined. Substituting $\mathbf{u}_i^{\mathrm{nom}}$ into the error dynamics cancels the leader-drift and feedforward terms, yielding the closed-loop system:
\begin{equation}\label{eq:err_dyn_final}
\dot{\mathbf{e}}_{ij} = \mathbf{F}\mathbf{u}_j + \mathbf{G}(\mathbf{G}^{-1}\mathbf{r}_i) - \dot{\mathbf{x}}^d_{ij} = -\mathbf{K}\mathbf{e}_{ij}.
\end{equation}
Given $\mathbf{K} \succ 0$, the linear system \eqref{eq:err_dyn_final} is exponentially stable. Specifically, the error satisfies $\|\mathbf{e}_{ij}(t)\| \le e^{-\lambda_{\min}(\mathbf{K})t} \|\mathbf{e}_{ij}(0)\|$ for all $t \ge 0$, where $\lambda_{\min}(\mathbf{K})$ denotes the smallest eigenvalue of $\mathbf{K}$. This confirms that $\mathbf{e}_{ij}(t)$ converges exponentially to $\mathbf{0}$ at a rate of $\lambda_{\min}(\mathbf{K})$.
\end{proof}

\subsection{Perception-Aware Safety Filter}\label{sec:perception_safety}
While the nominal formation control law~\eqref{eq:sfl_control} successfully achieves formation tracking, it is fundamentally agnostic to perception constraints. 
To guarantee reliable relative state estimation in vision-based LFF, the control framework must ensure that the leader continuously remains within the follower's perception-safe region in \eqref{eq:frustum-set}.
% We encode this requirement geometrically using the follower's camera frustum in Cartesian coordinates. 
% Specifically, we define the perception safe set as
% \begin{equation}\label{eq:frustum-set}
% \mathcal{S}_{ij}^\mathbf{q}^\mathbf{q}
% \triangleq
% \left\{
% \mathbf{q}_{ij}\in\mathbb{R}^3 \ \middle|\ 
% \begin{aligned}
% &x_\mathrm{near} \le x_{ij} \le x_\mathrm{far}\\
% &|y_{ij}| \le x_{ij}\tan(\Phi/2)\\
% &|z_{ij}| \le x_{ij}\tan(\Theta/2)
% \end{aligned}
% \right\}.
% \end{equation}
% where $\Phi,\Theta\in(0,\pi)$ denote the maximum horizontal and vertical FOV angles of the camera, and $x_\mathrm{near}, x_\mathrm{far}\in\mathbb{R}_{>0}$ specify the sensing range limits of the depth camera, respectively, satisfying $x_\mathrm{near}<x_\mathrm{far}$ (see Fig.~\ref{fig:leader--follower-geometry}).
% To guarantee that the leader remains strictly inside the follower's camera perception safe set, we augment the system with a CBF-QP safety filter. 
Recall that although the 3D-LFF state $\mathbf{x}_{ij}$ is expressed in spherical coordinates for compact control synthesis, the perception safe set $\mathcal{S}_{ij}^\mathbf{q}$ in~\eqref{eq:frustum-set} is naturally defined using the Cartesian relative position $\mathbf{q}_{ij} = [x_{ij}, y_{ij}, z_{ij}]^\top$. 
The geometric transformation from the state $\mathbf{x}_{ij}$ to $\mathbf{q}_{ij}$ is given by
\begin{equation}
\mathbf{q}_{ij}=\mathbf{q}(\mathbf{x}_{ij})=
\begin{bmatrix}
L_{ij}c_{\xi_{ij}}c_{\phi_{ij}} \\
L_{ij}c_{\xi_{ij}}s_{\phi_{ij}} \\
L_{ij}s_{\xi_{ij}}
\end{bmatrix}.
\label{eq:q_spherical}
\end{equation}
Formulating the safety filter in the Cartesian domain is highly advantageous because it allows the camera frustum boundaries to be expressed as linear inequalities. 
Accordingly, each boundary is encoded by a continuously differentiable function $h_\cidx:\mathbb{R}^3\to\mathbb{R}$ for $\cidx \in \{1, \dots, 6\}$:
\begin{equation}\label{eq:control-barrier-function}
\begin{aligned}
h_{1} &= x_{ij} - x_\mathrm{near}, 
&\; h_{2} &= x_\mathrm{far} - x_{ij},\\
h_{3} &= x_{ij} \tan(\Phi/2) + y_{ij}, 
&\; h_{4} &= x_{ij} \tan(\Phi/2) - y_{ij},\\
h_{5} &= x_{ij} \tan(\Theta/2) + z_{ij}, 
&\; h_{6} &= x_{ij} \tan(\Theta/2) - z_{ij},
\end{aligned}
\end{equation}
where each $h_\cidx$ corresponds to a specific geometric boundary of the frustum.
% \begin{remark}
% The CBFs $h_\cidx(\mathbf{q}(\mathbf{x}_{ij}))$ have a relative degree one with respect to the 3D-LFF kinematics in \eqref{eq:rel_kin}. 
% This implies that the control input $\mathbf{u}_i$ appears directly in the first-order time derivative $\dot{h}_\cidx$, facilitating the design of a first-order CBF-QP safety filter.
% \end{remark}

% By construction, the perception safe set are chosen such that $\mathcal{S}_{ij} \subset \mathcal{S}^q_{ij}$, ensuring that the perception-safe region lies entirely within the domain where the relative kinematics are well-defined (see Remark \ref{rem:FGLipschitz}).
Accordingly, we define the state-space perception safe set $\mathcal{S}_{ij}$ as the set of all states whose Cartesian mappings reside within $\mathcal{S}_{ij}^{\mathbf{q}}$, expressed as the intersection of 0-superlevel sets:
\begin{align}
\mathcal{S}_{ij} = \left\{ \mathbf{x}_{ij}  \in \mathbb{R}^4\mid h_\cidx(\mathbf{q}(\mathbf{x}_{ij})) \ge 0, \ \forall \cidx \in \{1, \dots, 6\} \right\}.
\end{align}
To enforce forward invariance of the safe set $\mathcal{S}_{ij}$ for the system \eqref{eq:rel_kin}, we impose the following CBF constraints
\begin{align}
\Gamma_\cidx(\mathbf{x}_{ij},\mathbf{u}_j,\mathbf{u}_i)
\triangleq&
\dot{h}_\cidx(\mathbf{x}_{ij},\mathbf{u}_j,\mathbf{u}_i)
+\kappa h_\cidx(\mathbf{q}(\mathbf{x}_{ij}))\nonumber\\
\ge&0,\quad\forall \cidx\in\{1, \dots, 6\},
\label{eq:Gamma_def}
\end{align}
where $\kappa\in \mathbb{R}_{>0}$ is a user-defined constant gain.
At each control timestep, the safe command is obtained by solving the following QP
\begin{align}\label{eq:CBF-QP}
\mathbf{u}^\mathrm{safe}_i
=
\arg\min_{\mathbf{u}_i} \quad
 &
\left\|
\mathbf{u}_i - \mathbf{u}_i^{\mathrm{nom}}
\right\|^2\\
\text{s.t.} \quad &
\Gamma_\cidx(\mathbf{x}_{ij},\mathbf{u}_j,\mathbf{u}_i)\ge 0,\quad\forall \cidx \in \{1, \dots, 6\}. \nonumber
\end{align}

\begin{theorem}\label{thm:cbf_qp_safety}
Consider the 3D-LFF kinematics $\dot{\mathbf{x}}_{ij} = \mathbf{F}(\mathbf{x}_{ij}) \mathbf{u}_j + \mathbf{G}(\mathbf{x}_{ij}) \mathbf{u}_i$ and the CBFs $h_\cidx(\mathbf{q}_{ij})$ in~\eqref{eq:control-barrier-function}. 
Let $\mathcal{S}_{ij}$ denote the perception safe set in the 3D-LFF state space. 
Suppose that, for all $t\ge 0$, the safety filter computes an input $\mathbf{u}_i^{\mathrm{safe}}(t)$ that satisfies the CBF constraints~\eqref{eq:Gamma_def} and the low-level velocity tracking
is ideal such that $\mathbf{u}^\mathrm{safe}_i = \mathbf{u}_i$.
Then, for any initial condition $\mathbf{x}_{ij}(0)\in\mathcal{S}_{ij}$, the set $\mathcal{S}_{ij}$ is forward invariant under~\eqref{eq:rel_kin}. 
\end{theorem}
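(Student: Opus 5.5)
The plan is to argue constraint by constraint with a comparison-lemma (Nagumo-type) argument, and then intersect the six resulting invariant sets.

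\emph{Well-posedness.} By Lemma~\ref{lem:GF_matrices}, $\mathbf{F}$ and $\mathbf{G}$ are locally Lipschitz on the region where $L_{ij}>0$ and $c_{\xi_{ij}}>0$. Every point of $\mathcal{S}_{ij}$ lies in this region: there $x_{ij}\ge x_\mathrm{near}>0$, so $L_{ij}>0$ and $c_{\xi_{ij}}\ge x_{ij}/L_{ij}>0$. I would therefore also assume that the closed-loop input $\mathbf{u}_i^{\mathrm{safe}}(\cdot)$ and the leader input $\mathbf{u}_j(\cdot)$ are regular enough (e.g.\ locally Lipschitz in the state, or piecewise continuous in time) for \eqref{eq:rel_kin} to have a unique absolutely continuous solution on some maximal interval $[0,T)$. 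For QP filters this is the standard regularity hypothesis.

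\emph{Single constraint.} Fix $\cidx\in\{1,\dots,6\}$ and set $\eta_\cidx(t)\triangleq h_\cidx(\mathbf{q}(\mathbf{x}_{ij}(t)))$. The map $\mathbf{q}(\cdot)$ in \eqref{eq:q_spherical} is smooth and each $h_\cidx$ is affine in $\mathbf{q}_{ij}$. By the chain rule,
\[
\dot{\eta}_\cidx=\nabla_{\mathbf{q}}h_\cidx^\top\,\frac{\partial \mathbf{q}}{\partial \mathbf{x}_{ij}}\big(\mathbf{F}\mathbf{u}_j+\mathbf{G}\mathbf{u}_i\big)=\dot h_\cidx(\mathbf{x}_{ij},\mathbf{u}_j,\mathbf{u}_i).
\]
By hypothesis $\mathbf{u}_i=\mathbf{u}_i^{\mathrm{safe}}$ satisfies \eqref{eq:Gamma_def}, so $\dot\eta_\cidx(t)\ge-\kappa\,\eta_\cidx(t)$ for almost every $t\in[0,T)$. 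The comparison lemma applied to $\dot y=-\kappa y$, $y(0)=\eta_\cidx(0)$, then gives
\[
\eta_\cidx(t)\ge e^{-\kappa t}\eta_\cidx(0)\ge 0.
\]
Alternatively, $\frac{d}{dt}\big(e^{\kappa t}\eta_\cidx\big)\ge 0$ gives the same bound directly.

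\emph{Intersection.} Since $\mathbf{x}_{ij}(0)\in\mathcal{S}_{ij}$, all six $\eta_\cidx(0)\ge0$, so $\eta_\cidx(t)\ge 0$ for every $\cidx$ and all $t\in[0,T)$. This is exactly $\mathbf{x}_{ij}(t)\in\mathcal{S}_{ij}$, equivalently $\mathbf{q}_{ij}(t)\in\mathcal{S}_{ij}^{\mathbf{q}}$.

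\emph{Extending to all $t\ge0$.} This is the main obstacle. The solution must not leave the domain of definition of $\mathbf{F},\mathbf{G}$, and must not blow up in finite time. The set $\mathcal{S}_{ij}^{\mathbf{q}}$ is compact: $x_{ij}\le x_\mathrm{far}$ bounds $y_{ij}$ and $z_{ij}$. It also stays a uniform distance away from the singular set $\{L_{ij}=0\}\cup\{c_{\xi_{ij}}=0\}$, since $L_{ij}\ge x_\mathrm{near}$ and $c_{\xi_{ij}}\ge 1/\sqrt{1+\tan^2(\Phi/2)+\tan^2(\Theta/2)}$.

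The angular variables still need care. $\phi_{ij}$ and $\xi_{ij}$ are bounded on $\mathcal{S}_{ij}$. The relative heading $\alpha_{ij}$ is not constrained, but it enters $\mathbf{F}$ and $\mathbf{G}$ only through periodic functions, so it grows at most linearly if the inputs are bounded. I would therefore state bounded inputs $\mathbf{u}_j,\mathbf{u}_i$ on $\mathcal{S}_{ij}$, or else treat $\alpha_{ij}$ modulo $2\pi$.

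Under these conditions the trajectory remains in a compact subset of the domain on which the vector field is Lipschitz. The standard extension theorem then gives $T=\infty$, and forward invariance of $\mathcal{S}_{ij}$ follows.
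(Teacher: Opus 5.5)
Your proposal is correct and follows the same route as the paper. Both arguments apply the chain rule to get $\dot h_\cidx$, use the comparison lemma on $\dot h_\cidx \ge -\kappa h_\cidx$ to obtain $h_\cidx(t)\ge e^{-\kappa t}h_\cidx(0)\ge 0$, and intersect over $\cidx\in\{1,\dots,6\}$. Your extra step showing the closed-loop solution exists on all of $[0,\infty)$ makes explicit what the paper only asserts through local Lipschitz continuity of $\mathbf{F}$, $\mathbf{G}$ and of the QP solution: you use compactness of $\mathcal{S}_{ij}^{\mathbf{q}}$, uniform distance from the singular set $\{L_{ij}=0\}\cup\{c_{\xi_{ij}}=0\}$, and a growth bound on the unconstrained $\alpha_{ij}$.
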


\begin{proof}
By the chain rule, the time derivative of the CBFs can be written as
\begin{align*}
    \dot{h}_\cidx(\mathbf{x}_{ij},\mathbf{u}_j,\mathbf{u}_i) =\nabla_{\mathbf{q}_{ij}} h_\cidx^\top  \frac{\partial \mathbf{q}_{ij}}{\partial \mathbf{x}_{ij}}  \dot{\mathbf{x}}_{ij}.
\end{align*}
% where $\nabla_{\mathbf{q}_{ij}} h_\cidx\in \mathbb{R}^{3}$ is is the gradient vector of the $k$-th control barrier function with respect to $\mathbf{q}_{ij}$ and $\mathbf{J}_{\mathbf{x}_{ij}} \mathbf{q}\in \mathbb{R}^{3 \times 4}$ is the Jacobian matrix of the coordinate transformation $\mathbf{q}(\mathbf{x}_{ij})$ with respect to $\mathbf{x}_{ij}$.
Since each $h_\cidx(\mathbf{q}_{ij})$ in~\eqref{eq:control-barrier-function} is affine in $\mathbf{q}_{ij}$, its gradient $\nabla_{\mathbf{q}_{ij}} h_\cidx = [\frac{\partial h_\cidx}{\partial x_{ij}}, \frac{\partial h_\cidx}{\partial y_{ij}}, \frac{\partial h_\cidx}{\partial z_{ij}}]^\top$  is constant.
Moreover, the mapping $\mathbf{q}(\mathbf{x}_{ij})$ in~\eqref{eq:q_spherical} is continuously differentiable on the domain $\mathcal{S}_{ij}$.
% , hence, $\mathbf{J}_{\mathbf{x}_{ij}}\mathbf{q}(\mathbf{x}_{ij})$ is locally Lipschitz on $\mathcal{S}_{ij}$.
By construction, the CBF-QP safety filter selects a control input $\mathbf{u}_i^{\mathrm{safe}}(t)$ that is locally Lipschitz (see \cite{ames2016control}) and satisfies the CBF constraints $\dot h_\cidx(t)\;\ge\; -\kappa\, h_\cidx(t)$.
According to Lemma \ref{lem:GF_matrices}, $\mathbf{F}$ and $\mathbf{G}$ are locally Lipschitz on the domain $\mathcal{S}_{ij}$.
This ensures that the system dynamics $\dot{\mathbf{x}}_{ij}$ and, consequently, $\dot{h}_\cidx$ are locally Lipschitz.
By the Comparison Lemma, it follows that:
\[
h_\cidx(t)\;\ge\; e^{-\kappa t}\,h_\cidx(0),\quad \forall t\ge 0.
\]
Therefore, if $\mathbf{x}_{ij}(0)\in\mathcal{S}_{ij}$, then $h_\cidx(0) \ge 0$, which implies $h_\cidx(t) \ge 0$ for all $t \ge 0$ and all $\cidx \in\{1,\ldots,6\}$. 
\end{proof}

% [Implementation Robustness and Feasibility]
\begin{remark}\label{rem:buffer_slack}
To ensure the feasibility of the QP in \eqref{eq:CBF-QP} under strict control input bounds $\mathbf{u}_i \in \mathcal{U}$ or aggressive leader maneuvers, we relax the CBF constraints using non-negative slack variables $\zeta_\cidx \in \mathbb{R}_{\ge0}$. The modified constraint is given by $\dot{h}_\cidx + \kappa h_\cidx \ge \delta_\cidx - \zeta_\cidx$, where $\zeta_\cidx$ is heavily penalized in the objective function to prioritize safety. The robustness margin $\delta_\cidx\in \mathbb{R}_{\ge0}$ provides a conservative buffer that accounts for unmodeled low-level tracking errors and external disturbances, ensuring the leader remains well within the FOV even under non-ideal conditions (see \cite{compton2024learning}).
\end{remark}

\section{Experiments}\label{sec:experiments}

This section evaluates the proposed perception-aware safe 3D-LFF controller on two platforms:
(A) a simulated platform in Gazebo utilizing the CrazySim framework \cite{LlanesICRA2024}, and (B) a physical hardware platform using the Crazyflie 2.1 brushless quadrotor.
For a direct comparative analysis, the multi-robot system comprises one leader and two followers, both executing the proposed method: one without the safety filter (NoCBF), and the other with the safety filter (CBF).
% ========= ARXIV =================
The approach is evaluated under two scenarios: (i) a three-stage task, and (ii) an abrupt leader motion task. In Scenario (i), to analyze the effectiveness of the safety filter, the evaluation task is divided into three stages. Stage 1 represents the nominal condition where no safety conflict exists, Stage 2 introduces a safety-conflicting formation, and Stage 3 removes the safety conflict. Scenario (ii) focuses on the transient safety response to aggressive leader motion, specifically a sudden leader deceleration.
% ========= IEEE-RAL =================
% To analyze the effectiveness of the safety filter, the evaluation task is divided into three stages. Stage 1 represents the nominal condition where no safety conflict exists, Stage 2 introduces a safety-conflicting formation, and Stage 3 removes the safety conflict. 

Due to payload and sensing constraints on the Crazyflie, onboard relative sensing is not used in experiments.
Instead, a motion-capture system provides global poses, from which the 3D-LFF states are computed and used as inputs to the proposed distributed controller and safety filter. 
% A host pc sends the velocity command at 20Hz. 
% The velocity command is executed by the onboard geometric controller running at 100Hz. 
This setup emulates ideal relative measurements and isolates the contribution of the control and safety filter layers. 
Simulation and experimental results demonstrate that the proposed framework successfully tracks desired formations while strictly prioritizing perception safety, whereas the NoCBF baseline fails to maintain the leader within the perception safe set.

\subsection{Gazebo Simulation}\label{sec:gazebo_exp}

The proposed framework is first evaluated in a confined cave environment, simulating a GPS-denied setting that requires strict vision-based coordination.
As visualized in Fig.~\ref{fig:scenario}, the terrain is designed to naturally trigger the proposed test scenario. 
% ========= ARXIV =================
An open chamber within the cave allows the multi-UAV team to expand their nominal formation for exploration tasks, introducing the safety-conflicting formation evaluated in Stage 2 of Scenario (i). Subsequently, navigating through closed corridors forces the leader to execute sharp, sudden maneuvers, providing the conditions for the abrupt motion test in Scenario (ii).
% ========= IEEE-RAL =================
% An open chamber within the cave allows the multi-UAV team to expand their nominal formation for exploration tasks, introducing the safety-conflicting formation evaluated in Stage 2.

\begin{figure}[!t]
 \centering
% ========= ARXIV =================
 \includegraphics[width=1\linewidth]{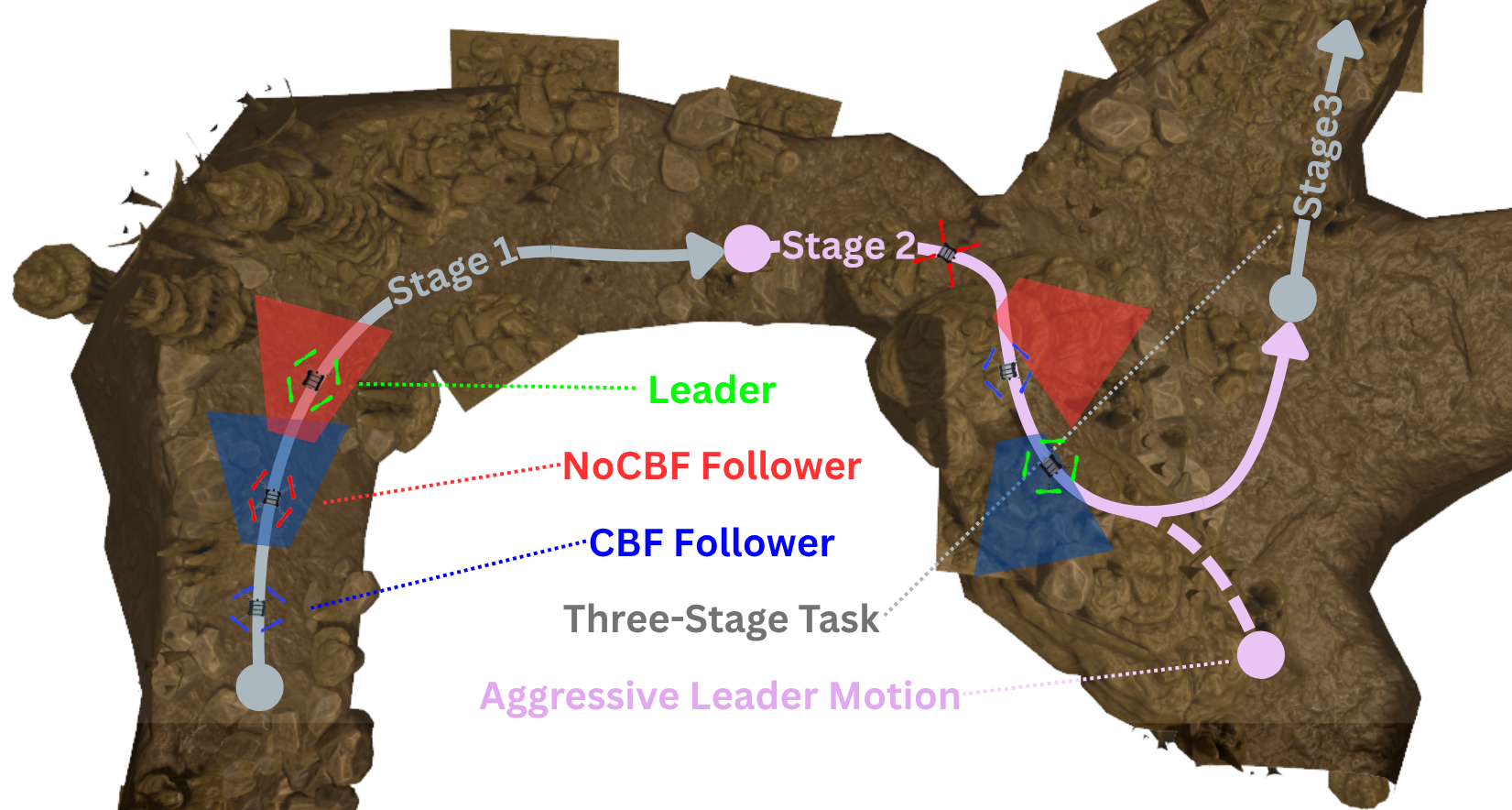}
\caption{Multi-UAV leader--follower navigation in a Gazebo cave environment. The leader (green), CBF-equipped follower (blue), and NoCBF baseline (red) are shown with their respective camera frustums (shaded regions). The evaluation paths include a three-stage task introducing a safety-conflicting formation, followed by an abrupt leader motion task to thoroughly evaluate the proposed method.}
% ========= IEEE-RAL =================
 % \includegraphics[width=1\linewidth]{figures/scenario.png}
 % \caption{Multi-UAV leader--follower navigation in a Gazebo cave environment. The leader (green), CBF-equipped follower (blue), and NoCBF baseline (red) are shown with their respective camera frustums (shaded regions). The evaluation paths include a three-stage task introducing a safety-conflicting formation to evaluate the proposed method.}
 \label{fig:scenario}
\end{figure}

% ========= ARXIV =================
\subsubsection{Three-Stage Task}\label{sec:three}
% ========= ARXIV =================
\begin{figure}[!t]
 \centering
 \includegraphics[width=1\linewidth]{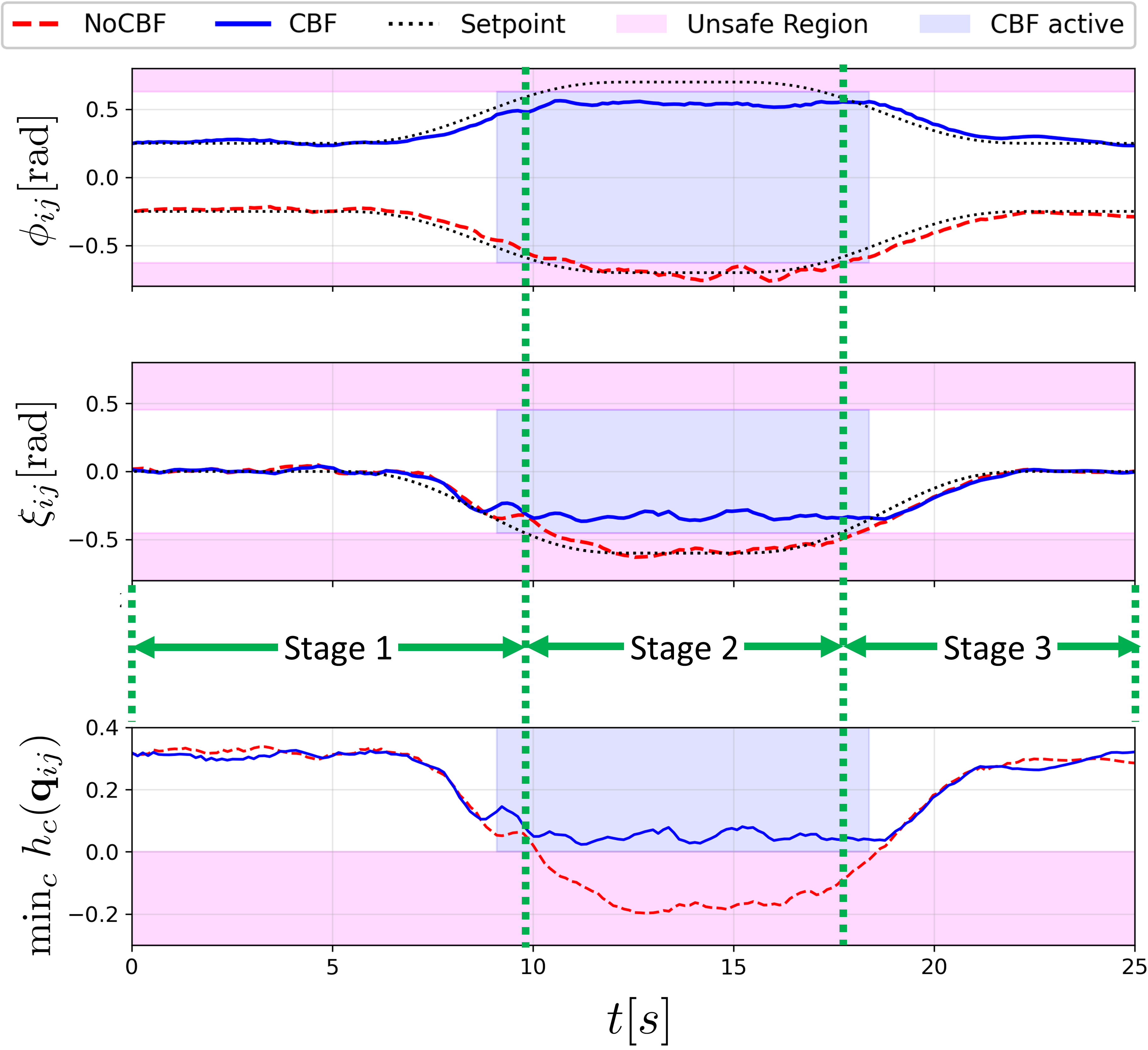}
 \caption{Gazebo simulation of the three-stage task. A comparison of 3D-LFF states (top and middle) and minimum CBF value (bottom). During Stage 2, desired setpoints approach the unsafe region (shaded in magenta). Consequently, the NoCBF follower (dashed red line) violates the safety constraints, whereas the CBF follower (solid blue line) remains strictly safe ($\min_\cidx\, h_\cidx(\mathbf{q}_{ij}) \ge 0$). Blue shaded regions indicate active safety filter intervention.}
 \label{fig:crazysim_conflicting}
\end{figure}
The proposed scheme is evaluated and compared against the baseline under a three-stage task scenario, with the corresponding Gazebo simulation results depicted in Fig.~\ref{fig:crazysim_conflicting}. 
The analysis focuses on the trajectories of the horizontal angle $\phi_{ij}$ and elevation angle $\xi_{ij}$, as these angular states are directly constrained by the camera frustum. 
Furthermore, the bottom plot illustrates the minimum barrier value ($\min_\cidx\, h_\cidx(\mathbf{q}_{ij})$), which serves as a unified metric for perception safety.

In the first stage, the team navigates a narrow tunnel where each follower is able to track its nominal setpoints, successfully maintaining a V-formation.
In the second stage, the formation is expanded to increase environmental coverage. 
This formation change introduces a conflict between formation and perception safety objectives. 
To resolve this, the proposed CBF-QP framework dynamically prioritizes safety over nominal tracking. As the leader approaches the boundary of the camera frustum ($9.09\text{ s} < t < 18.38\text{ s}$), the active CBF filter intervenes, ensuring the leader remains strictly within the perception safe set ($\min_\cidx\, h_\cidx(\mathbf{q}_{ij}) \ge 0$).
On the other hand, the NoCBF follower fails to adapt, resulting in severe safety violations as its states enter the unsafe regions and its minimum barrier value drops below zero ($\min_\cidx\, h_\cidx(\mathbf{q}_{ij}) < 0$).
In the third stage, the conflict is removed, and both followers are able to converge to their nominal formation objectives.

% ========= ARXIV =================
\subsubsection{Abrupt Leader Motion}\label{sec:abrupt}

\begin{figure}[!t]
\centering
\includegraphics[width=1\linewidth]{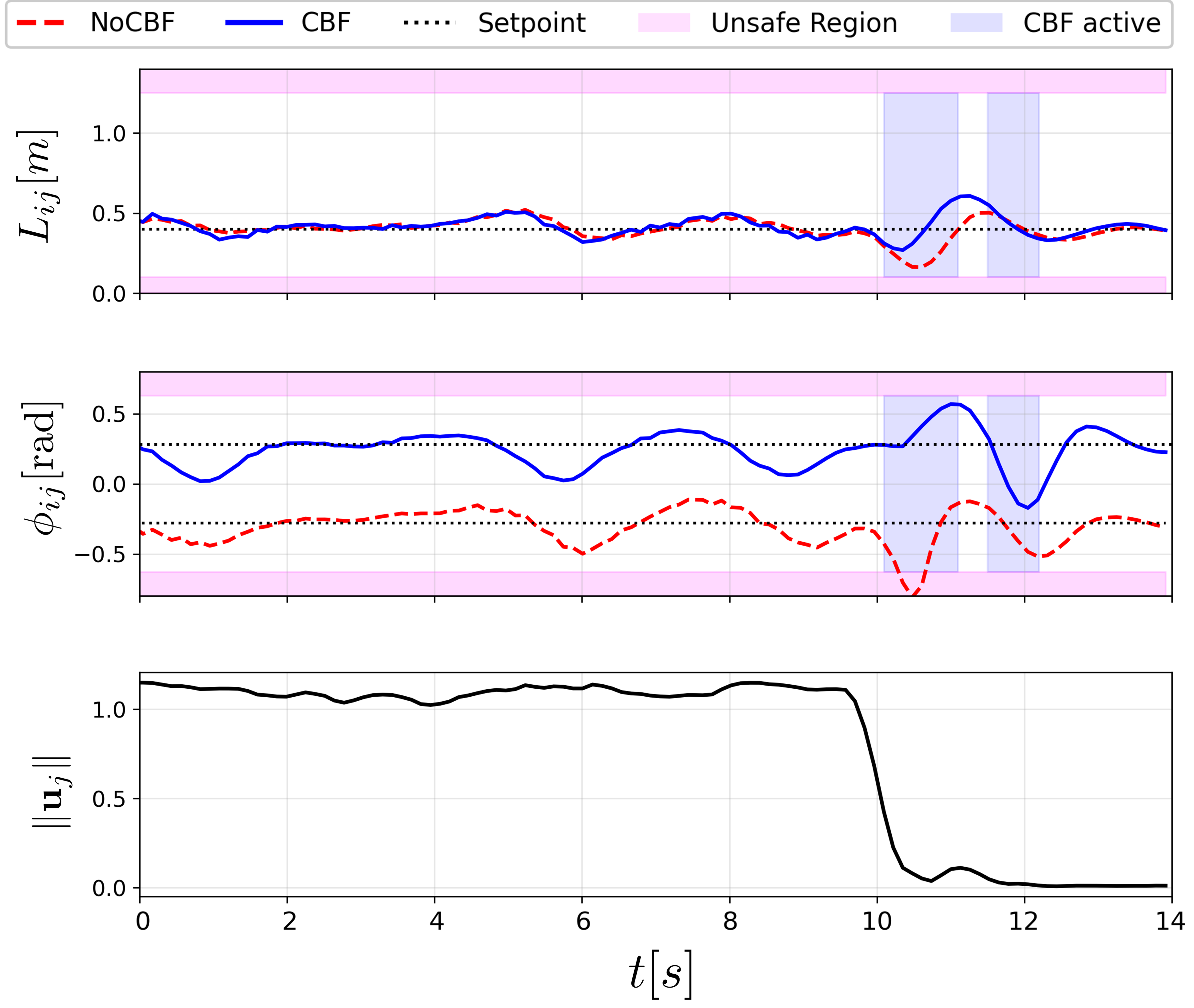}
\caption{Gazebo simulation of abrupt leader motion. A comparison of 3D-LFF states (top and middle) and Euclidean norm of leader velocity (bottom). Sudden leader deceleration rapidly decreases $L_{ij}$ and disturbs $\phi_{ij}$, driving the nominal formation toward unsafe region (shaded in magenta). Consequently, the NoCBF baseline (dashed red line) violates constraints, whereas the CBF controller (solid blue line) successfully maintain safety.}
\label{fig:crazysim_abrupt}
\end{figure}

The proposed scheme is evaluated and compared against the baseline under an abrupt leader motion, with the corresponding Gazebo simulation results depicted in Fig.~\ref{fig:crazysim_abrupt}. 
To analyze the transient response to the abrupt leader motion, the plots focus on the camera-to-leader distance $L_{ij}$ and the horizontal angle $\phi_{ij}$, respectively. 
These specific states are highlighted because a sudden leader deceleration rapidly compresses the formation and induces lateral drift, directly threatening the perception safety. 
Finally, the bottom plot displays the Euclidean norm of the leader's velocity $||\mathbf{u}_j||$, explicitly illustrating the magnitude and timing of the sudden deceleration maneuver.

In this scenario, the leader executes an initial maneuver with faster speed where both followers track a tight formation. 
Reaching a dead end, the leader suddenly decelerates, indicated by a sharp drop in $||\mathbf{u}_j||$. 
The baseline fails to handle the sharp transient, causing the follower to aggressively overshoot and violating the perception safety on $\phi_{ij}$. 
In contrast, the proposed approach successfully mitigates this by modifying the control actions to bound the overshoot.
The safety filter enforces the perception constraints, keeping the leader within the follower's camera frustum under the leader's abrupt movement. 
Once the leader's velocity settles to zero, the follower smoothly converges back to the desired tracking setpoints.
% ========= ARXIV =================

\subsection{Hardware Experiments}\label{sec:hw_exp}

The hardware experiments validate the proposed framework under real-world flight conditions, capturing the effects of unmodeled aerodynamic drag, communication latency, and actuator noise. 
The system is evaluated using the identical leader--follower protocol from the simulations, repeating the three-stage task to quantitatively assess perception safety and formation tracking during physical flight. 
Furthermore, to rigorously stress-test the proposed method during these maneuvers, the leader is commanded to track a yaw-modulated lemniscate trajectory, as depicted in Figs.~\ref{fig:exp_top_view} and \ref{fig:exp_3d_traj}.
\begin{figure}[!t]
 \centering
 % First Subfigure: Top View
 \subfloat[Top view of the experiment.\label{fig:exp_top_view}]{
 \includegraphics[width=\linewidth]{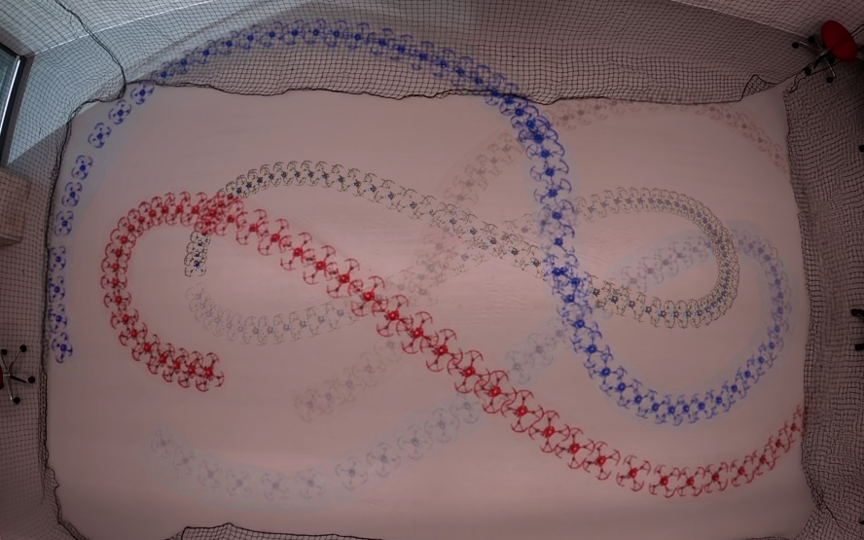}
 }

 % Second Subfigure: 3D Trajectory
 \subfloat[3D perspective of the trajectories.\label{fig:exp_3d_traj}]{
 \includegraphics[width=\linewidth]{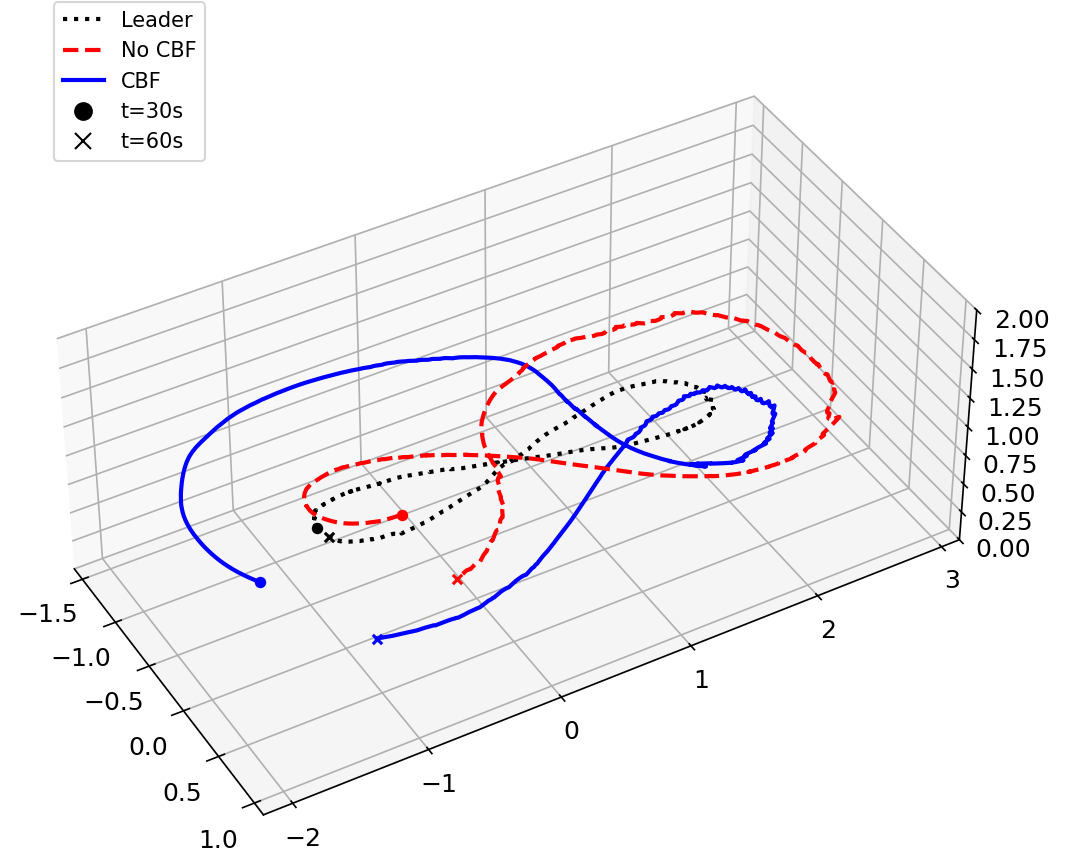}
 }
 
 \caption{Trajectories of the leader and the followers during the three-stage task scenario for the hardware experiment. For clarity, we show only the lemniscate trajectory from the second half of the experiment. Markers indicate the positions at 30s and 60s.}
 \label{fig:exp_combined_trajectory}
\end{figure}

\begin{figure}[!t]
 \centering
 \includegraphics[width=1\linewidth]{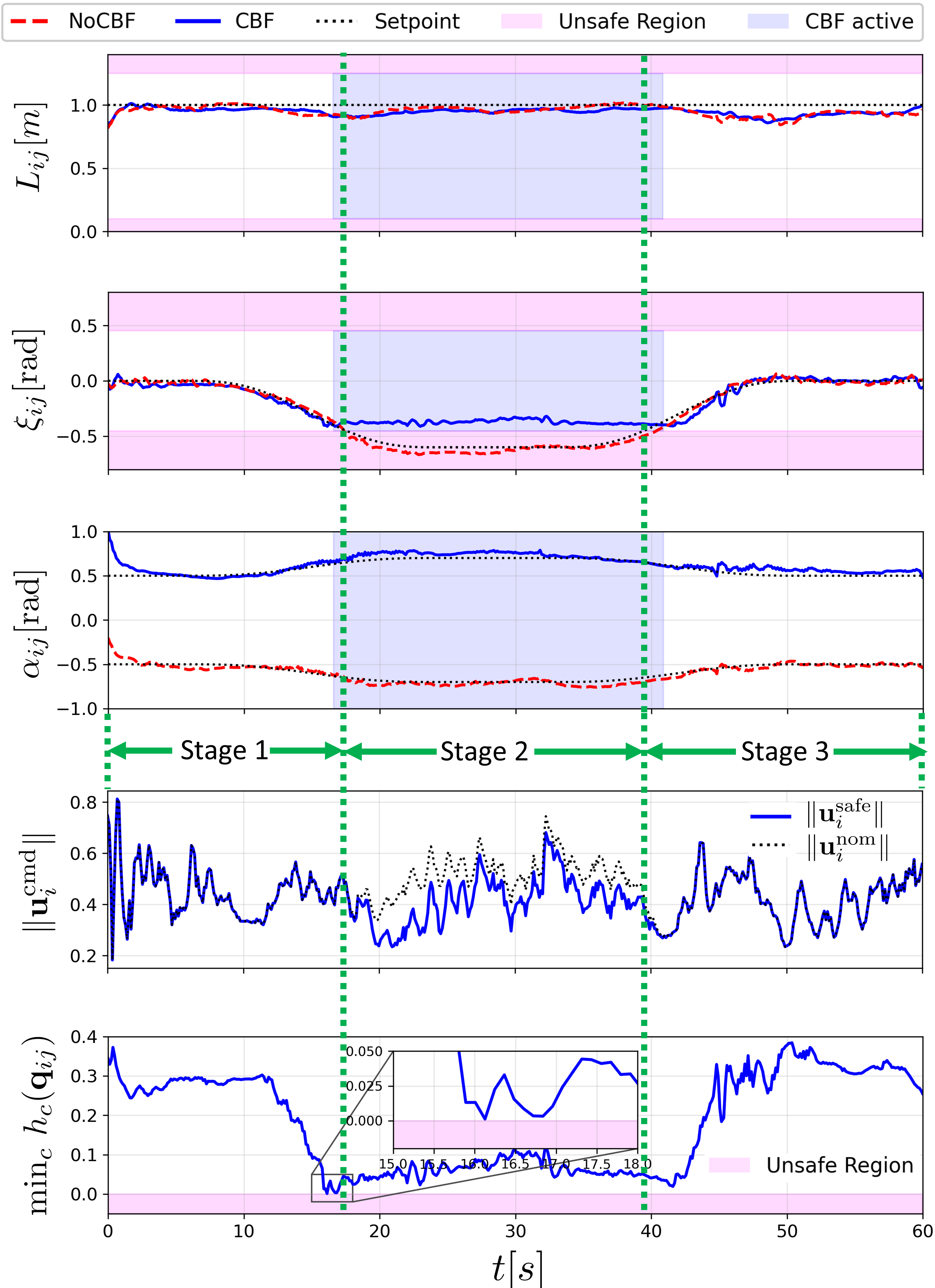}
 \caption{Hardware experiment of the three-stage task. A comparison of 3D-LFF states (first 3 plots) and control input metrics (last 2 plots). During the Stage 2 conflict, desired setpoints approach the unsafe region (shaded in magenta). Consequently, the NoCBF baseline (dashed red line) violates the safety constraints, whereas the CBF follower (solid blue line) remains strictly safe ($\min_\cidx\, h_\cidx(\mathbf{q}_{ij}) > 0$). Blue shaded regions indicate active safety filter intervention ($\|\mathbf{u}_i^{\mathrm{safe}}\| \neq \|\mathbf{u}_i^{\mathrm{nom}}\|$).}
 \label{fig:exp_shorten}
\end{figure}

The hardware experimental results for the three-stage task are presented in Fig.~\ref{fig:exp_shorten}. 
The analysis focuses on the trajectories of the distance $L_{ij}$, elevation angle $\xi_{ij}$, and relative heading $\alpha_{ij}$ to evaluate the framework's ability to seamlessly resolve perception safety conflicts and preserve formation tracking during periods of active CBF intervention. 
Furthermore, the fourth plot displays the comparison of $\|\mathbf{u}_i^{\mathrm{nom}}\|$ and $\|\mathbf{u}_i^{\mathrm{safe}}\|$ to illustrate the active velocity command adjustments made by the safety filter, while the last plot illustrates the minimum barrier value ($\min_\cidx\, h_\cidx(\mathbf{q}_{ij})$), which serves as a unified metric for perception safety.

During Stage 1, both controllers are able to track the desired formation, yielding time-averaged absolute errors of approximately $\SI{0.040}{m}$ for $L_{ij}$, $\SI{0.020}{\radian}$ for $\xi_{ij}$, and $\SI{0.058}{\radian}$ for $\alpha_{ij}$.
In Stage 2, formation expansion drives the desired elevation angle $\xi_{ij}$ into the unsafe region. Consequently, the NoCBF baseline violates the safety constraints. 
Conversely, the proposed framework safely resolves this conflict by computing a minimally modified safe control input $\|\mathbf{u}_i^{\mathrm{safe}}\| \neq \|\mathbf{u}_i^{\mathrm{nom}}\|$. 
During this active intervention ($16.66\text{ s} < t < 40.94\text{ s}$), the filter strictly bounds $\xi_{ij}$ within the perception safe region ($\min_\cidx\, h_\cidx(\mathbf{q}_{ij}) > 0$) while simultaneously sustaining accurate tracking of the unconstrained states $L_{ij}$ and $\alpha_{ij}$. 
Finally, in Stage 3, the safety conflict is removed, and both followers are able to converge to their formation objectives.

% ============================================================
\section{Conclusion}

This paper presented a perception-aware safe distributed leader--follower formation control framework for multi-UAV systems equipped with body-fixed cameras and limited visibility.
We derived a 3D leader--follower formation (3D-LFF) kinematic representation in spherical coordinates and designed a nominal formation controller to track time-varying formation references.
To guarantee continuous visibility, we incorporated a CBF-QP safety filter that encodes camera-frustum constraints-including FOV limits and depth range-in Cartesian coordinates and computes minimally modified safe velocity commands.
The resulting architecture resolves conflicts between aggressive formation objectives and perception constraints by prioritizing visibility while preserving stable tracking and smooth recovery.
% ========= ARXIV =================
The method was evaluated in Gazebo simulations and hardware experiments under a three-stage formation task and an abrupt leader motion scenario.
Across these tests, when the visibility constraints were inactive, the nominal controller achieved stable formation tracking and convergence to the desired 3D-LFF reference.
When the desired formation conflicted with safety constraints, or when the system was subjected to sharp transients from sudden leader deceleration, the CBF-QP safety filter dynamically modified the commanded velocities to enforce perception safety. 
This ensured that the leader remained strictly within the follower's camera frustum at all times, with the followers smoothly recovering the formation once the conflict or disturbance passed.
% ========= IEEE-RAL =================
% The method was evaluated in Gazebo simulations and hardware experiments under representative scenarios.
% Across these tests, when the visibility constraints were inactive, the nominal controller achieved stable formation tracking and convergence to the desired 3D-LFF reference.
% When the desired formation conflicted with safety constraints, the CBF-QP safety filter modified the commanded velocities to enforce perception safety, ensuring that the leader remained within the follower's camera frustum at all time.
While the current experimental setup isolates the control and safety layers using motion capture, future work will integrate onboard perception to close the loop using raw image-based measurements.
% Additional directions include extending the formulation to limited-FOV cameras with occlusions and clutter, incorporating actuator and communication constraints, and developing robust or predictive barrier constructions to improve performance under latency and model mismatch.

\bibliographystyle{ieeetr}
\bibliography{references}

\appendix
\section{3D-LFF Kinematics Derivation}\label{app:los_derivation}

Formulating the kinematics of the spherical states $\mathbf{\dot{x}}_{ij}=[\dot{L}_{ij}, \dot{\phi}_{ij}, \dot{\xi}_{ij}, \dot{\alpha}_{ij}]^\top$ relies on determining the Cartesian relative velocity.
We begin by taking the time derivative of the relative position $\mathbf{q}_{ij} = \mathbf{R}_i^{\top}(\mathbf{p}_j-\mathbf{p}_i) - d_i\mathbf{b}_1$, which yields:
\begin{equation}
\dot{\mathbf{q}}_{ij} = \dot{\mathbf{R}}_i^{\top}(\mathbf{p}_j-\mathbf{p}_i) + \mathbf{R}_i^{\top}(\dot{\mathbf{p}}_j-\dot{\mathbf{p}}_i).
\end{equation}
% We implement the low-level controller using the velocity commands $\mathbf{u}_i=[\overline{v}_{x,i},\,\overline{v}_{y,i},\,\overline{v}_{z,i},\,\overline{\omega}_{z,i}]^\top$, with roll and pitch regulated implicitly by the inner-loop velocity controller. 
% Therefore, the attitude required in the kinematic relations is captured by the heading (yaw) only, i.e.,
We assume the quadrotor operates in a near-hover regime where roll and pitch angles remain small. 
Consequently, the rotation matrix $\mathbf{R}_k$ from the body frame $\mathcal{F}_{B_k}$ to the inertial frame $\mathcal{F}_I$ is approximated by the rotation $\mathbf{\overline{R}}_k = \mathbf{R}_z(\psi_k)$ from the yaw-aligned frame $\mathcal{F}_{\psi_k}$ to $\mathcal{F}_I$.
% \begin{equation}
% \mathbf{R}_i=\mathbf{R}_z(\psi_i), \quad \mathbf{R}_j=\mathbf{R}_z(\psi_j),
% \end{equation}
Let $\dot{\psi}_k = \overline{\omega}_{z,k}$, so that
\begin{equation}\label{eq:app_rhat}
\dot{\overline{\mathbf{R}}}_i = \overline{\mathbf{R}}_i[\overline{\omega}_{z,i} \mathbf{b}_3]_\times, \quad \dot{\overline{\mathbf{R}}}_i^{\top} = -[\overline{\omega}_{z,i} \mathbf{b}_3]_\times \overline{\mathbf{R}}_i^{\top}.
\end{equation}
% where $\mathbf{b}_3$ is the third standard basis vector of $\mathbb{R}^3$.
Let $\mathbf{\overline{v}}_k = [\overline{v}_{x,k}, \overline{v}_{y,k}, \overline{v}_{z,k}]^\top$ denote the velocity expressed in $\mathcal{F}_{\psi_k}$. 
The relationship between the inertial velocity $\dot{\mathbf{p}}_k$ and the yaw-aligned rotation is given by:
\begin{equation}
\mathbf{\overline{v}}_i \triangleq \overline{\mathbf{R}}_i^{\top}\dot{\mathbf{p}}_i, \quad \mathbf{\overline{v}}_j \triangleq \overline{\mathbf{R}}_j^{\top}\dot{\mathbf{p}}_j.
\end{equation}
Consequently,
\begin{equation}
\overline{\mathbf{R}}_i^{\top}\dot{\mathbf{p}}_j = \overline{\mathbf{R}}_i^{\top}\overline{\mathbf{R}}_j\,\mathbf{\overline{v}}_j = \mathbf{R}_z(\psi_j-\psi_i)\,\mathbf{\overline{v}}_j.
\label{eq:app_RiT_pjdot}
\end{equation}
Since $\alpha_{ij}=\psi_j-\psi_i-\phi_{ij}$, then $\psi_j-\psi_i=\alpha_{ij}+\phi_{ij}$. Therefore, \eqref{eq:app_RiT_pjdot} can be written as
\begin{equation}
\overline{\mathbf{R}}_i^{\top}\dot{\mathbf{p}}_j = \mathbf{R}_z(\alpha_{ij}+\phi_{ij})\,\mathbf{\overline{v}}_j.
\label{eq:app_RiT_pjdot_alpha_phi}
\end{equation}

With the required rotational relations established, substituting \eqref{eq:app_rhat}--\eqref{eq:app_RiT_pjdot_alpha_phi} into the time derivative of $\mathbf{q}_{ij}$ yields
\begin{align}
\dot{\mathbf{q}}_{ij} = -[\overline{\omega}_{z,i} \mathbf{b}_3]_\times (\mathbf{q}_{ij}+d_i\,\mathbf{b}_1) + \mathbf{R}_z(\alpha_{ij}+\phi_{ij})\mathbf{\overline{v}}_j - \mathbf{\overline{v}}_i.
\label{eq:app_sdot_pre}
\end{align}
Using the identity $[\mathbf{b}_3]_\times[a, b, c]^\top=[-b, a, 0]^\top$, we obtain
\begin{equation}
-[\overline{\omega}_{z,i} \mathbf{b}_3]_\times (\mathbf{q}_{ij}+d_i\,\mathbf{b}_1) = \overline{\omega}_{z,i}
\begin{bmatrix}
y_{ij} \\
-(x_{ij}+d_i) \\
0
\end{bmatrix}.
\end{equation}
Therefore, the components of $\dot{\mathbf{q}}_{ij}$ are given by
\begin{equation}
\dot{\mathbf{q}}_{ij} = -\mathbf{\overline{v}}_i + \mathbf{R}_z(\alpha_{ij}+\phi_{ij})\mathbf{\overline{v}}_j + \overline{\omega}_{z,i}
\begin{bmatrix}
y_{ij} \\
-(x_{ij}+d_i) \\
0
\end{bmatrix}.
\label{eq:app_sdot_core}
\end{equation}

Having obtained the Cartesian relative velocity $\dot{\mathbf{q}}_{ij}$, we project it into spherical coordinates to derive the scalar kinematics. First, for the relative distance $L_{ij}\triangleq\|\mathbf{q}_{ij}\|=\sqrt{\mathbf{q}_{ij}^{\top}\mathbf{q}_{ij}}$, differentiating yields
\begin{equation}
\dot{L}_{ij} = \frac{\mathbf{q}_{ij}^{\top}\dot{\mathbf{q}}_{ij}}{\|\mathbf{q}_{ij}\|} = \left(\frac{\mathbf{q}_{ij}}{\|\mathbf{q}_{ij}\|}\right)^{\top}\dot{\mathbf{q}}_{ij},
\label{eq:app_Ldot_id}
\end{equation}
where the unit 3D-LFF direction vector is
\begin{equation}
\frac{\mathbf{q}_{ij}}{\|\mathbf{q}_{ij}\|} =
\begin{bmatrix}
c_{\xi_{ij}} c_{\phi_{ij}}\\
c_{\xi_{ij}} s_{\phi_{ij}} \\
s_{\xi_{ij}}
\end{bmatrix}.
\end{equation}

Substituting $x_{ij}=L_{ij} c_{\xi_{ij}} c_{\phi_{ij}}$, $y_{ij}=L_{ij} c_{\xi_{ij}} s_{\phi_{ij}}$, $z_{ij}=L_{ij} s_{\xi_{ij}}$, and \eqref{eq:app_sdot_core} into \eqref{eq:app_Ldot_id} yields the range rate:
\begin{equation}
\begin{aligned}
\dot{L}_{ij} ={}& - c_{\xi_{ij}} \big( c_{\phi_{ij}}\, \overline{v}_{x,i} + s_{\phi_{ij}}\, \overline{v}_{y,i} \big) - s_{\xi_{ij}}\, \overline{v}_{z,i} - d_i\,\overline{\omega}_{z,i}\, c_{\xi_{ij}} s_{\phi_{ij}}\\
&+ c_{\xi_{ij}} \Big( c_{\alpha_{ij}}\, \overline{v}_{x,j} - s_{\alpha_{ij}}\, \overline{v}_{y,j} \Big) + s_{\xi_{ij}}\, \overline{v}_{z,j}.
\end{aligned}
\label{eq:app_Ldot}
\end{equation}
The azimuth angle rate follows from $\phi_{ij}=\mathrm{atan2}(y_{ij},x_{ij})$, where $\dot{\phi}_{ij}=(x_{ij}\dot{y}_{ij}-y_{ij}\dot{x}_{ij})/({x_{ij}}^2+{y_{ij}}^2)$, giving
\begin{align}
\dot{\phi}_{ij} =& \frac{s_{\phi_{ij}}\, \overline{v}_{x,i} - c_{\phi_{ij}}\, \overline{v}_{y,i}}{L_{ij} c_{\xi_{ij}}} 
-\frac{d_i\, c_{\phi_{ij}}}{L_{ij} c_{\xi_{ij}}}\,\overline{\omega}_{z,i} \nonumber\\
& +\frac{s_{\alpha_{ij}}\, \overline{v}_{x,j} + c_{\alpha_{ij}}\, \overline{v}_{y,j}}{L_{ij} c_{\xi_{ij}}} -\overline{\omega}_{z,i}.
\label{eq:app_phidot}
\end{align}
The elevation angle rate follows from $s_{\xi_{ij}}=z_{ij}/L_{ij}$, yielding
\begin{equation}
\begin{aligned}
\dot{\xi}_{ij} =& \frac{s_{\xi_{ij}}\big(c_{\phi_{ij}}\, \overline{v}_{x,i} + s_{\phi_{ij}}\, \overline{v}_{y,i}\big) - c_{\xi_{ij}}\, \overline{v}_{z,i}}{L_{ij}} +\frac{d_i\, s_{\xi_{ij}} s_{\phi_{ij}}}{L_{ij}}\,\overline{\omega}_{z,i}\\
&-\frac{s_{\xi_{ij}} c_{\alpha_{ij}}}{L_{ij}}\, \overline{v}_{x,j}+\frac{s_{\xi_{ij}} s_{\alpha_{ij}}}{L_{ij}}\, \overline{v}_{y,j} +\frac{c_{\xi_{ij}}}{L_{ij}}\, \overline{v}_{z,j}.
\label{eq:app_xidot}
\end{aligned}
\end{equation}
Finally, the relative heading rate is
\begin{align}
\dot{\alpha}_{ij} =& \overline{\omega}_{z,j} -\frac{s_{\phi_{ij}}\, \overline{v}_{x,i} - c_{\phi_{ij}}\, \overline{v}_{y,i}}{L_{ij} c_{\xi_{ij}}} +\frac{d_i\, c_{\phi_{ij}}}{L_{ij} c_{\xi_{ij}}}\,\overline{\omega}_{z,i} \nonumber\\
&-\frac{s_{\alpha_{ij}}\, \overline{v}_{x,j} + c_{\alpha_{ij}}\, \overline{v}_{y,j}}{L_{ij} c_{\xi_{ij}}}.
\label{eq:app_alphadot}
\end{align}
\balance

To guarantee the absence of kinematic singularities, we must verify that any relative position within the frustum set, $\mathbf{q}_{ij} \in \mathcal{S}_{ij}^\mathbf{q}$, satisfies $L_{ij} > 0$ and $\cos(\xi_{ij}) > 0$. 
For any $\mathbf{q}_{ij} \in \mathcal{S}_{ij}^\mathbf{q}$, the depth constraint $x_{ij} \ge x_\mathrm{near} > 0$ immediately guarantees that the relative distance satisfies $L_{ij} = \|\mathbf{q}_{ij}\| \ge x_{ij} > 0$. 
Furthermore, because $x_{ij} > 0$, the vertical frustum inequalities yield $|z_{ij}/x_{ij}| \le \tan(\Theta/2)$. Evaluating the tangent of the elevation angle using the Cartesian components gives
\begin{align}
    |\tan(\xi_{ij})| = \frac{|z_{ij}|}{\sqrt{x_{ij}^2 + y_{ij}^2}} \le \left|\frac{z_{ij}}{x_{ij}}\right| \le \tan(\Theta/2).
\end{align}
Since the vertical FOV satisfies $\Theta \in (0, \pi)$, the elevation angle is bounded by $\xi_{ij} \in (-\pi/2, \pi/2)$, which guarantees that $\cos(\xi_{ij}) > 0$.

With these geometric bounds established, the analytical regularity of the kinematics follows naturally.
The right-hand side of the system \eqref{eq:app_Ldot}--\eqref{eq:app_alphadot} comprises smooth trigonometric functions, linear state mappings, and rational fractions. The only denominators present are $L_{ij}$ and $L_{ij} \cos(\xi_{ij})$.
Evaluating the matrix $\mathbf{G}(\mathbf{x}_{ij})$ in \eqref{eq:FG} yields the determinant 
\begin{align}
    \det(\mathbf{G}) = \frac{1}{L_{ij}^2 \cos(\xi_{ij})}.
\end{align} 
As shown above, for any relative position $\mathbf{q}_{ij} \in \mathcal{S}_{ij}^\mathbf{q}$, we have $L_{ij} > 0$ and $\cos(\xi_{ij}) > 0$. 
Therefore, all denominators are strictly bounded away from zero. 
Consequently, this non-vanishing property guarantees two critical conditions over $\mathcal{S}_{ij}^\mathbf{q}$: (i) the input matrix $\mathbf{G}$ remains non-singular, and (ii) the kinematic vector fields remain continuously differentiable ($C^1$), ensuring local Lipschitz continuity.

% \begin{remark}[Frustum admissible set implies LOS admissibility]\label{rem:S_subset_C}
% Let $\mathbf{q}_{ij}=[x_{c,ij},\,y_{c,ij},\,z_{c,ij}]^\top \in \mathcal{S}$.
% Since $x_{c,ij}\ge \lambda_{\min}>0$, the frustum inequalities in \eqref{eq:frustum-set} imply
% $\left|\frac{y_{c,ij}}{x_{c,ij}}\right| \le T_h$ and
% $\left|\frac{z_{c,ij}}{x_{c,ij}}\right| \le T_v$.
% Using $|\tan(\phi_{ij})|=\left|\frac{y_{c,ij}}{x_{c,ij}}\right|$ gives $|\phi_{ij}|\le \Phi/2$.
% Moreover,
% $|\tan(\xi_{ij})|
% =
% \left|\frac{z_{c,ij}}{\sqrt{x_{c,ij}^2+y_{c,ij}^2}}\right|
% \le
% \left|\frac{z_{c,ij}}{x_{c,ij}}\right|
% \le T_v,$
% which implies $|\xi_{ij}|\le \Theta/2$.
% Hence $\mathbf{x}_{ij}\in\mathcal{C}$ whenever $\mathbf{q}_{ij}\in\mathcal{S}$, i.e., $\mathcal{S}\subseteq\mathcal{C}$.
% Therefore, all results established on $\mathcal{C}$ also hold on $\mathcal{S}$.
% \end{remark}
\newpage

% \section{Biography Section}
% If you have an EPS/PDF photo (graphicx package needed), extra braces are
% needed around the contents of the optional argument to biography to prevent
% the LaTeX parser from getting confused when it sees the complicated
% $\backslash${\tt{includegraphics}} command within an optional argument. (You can create
% your own custom macro containing the $\backslash${\tt{includegraphics}} command to make things
% simpler here.)
 
% \vspace{11pt}

% \bf{If you include a photo:}\vspace{-33pt}
% \begin{IEEEbiography}[{\includegraphics[width=1in,height=1.25in,clip,keepaspectratio]{fig1}}]{Michael Shell}
% Use $\backslash${\tt{begin\{IEEEbiography\}}} and then for the 1st argument use $\backslash${\tt{includegraphics}} to declare and link the author photo.
% Use the author name as the 3rd argument followed by the biography text.
% \end{IEEEbiography}

% \vspace{11pt}

% \bf{If you will not include a photo:}\vspace{-33pt}
% \begin{IEEEbiographynophoto}{John Doe}
% Use $\backslash${\tt{begin\{IEEEbiographynophoto\}}} and the author name as the argument followed by the biography text.
% \end{IEEEbiographynophoto}

\vfill

\end{document}